\newcommand{\converse}{\smallsmile}
\newcommand{\convex}[1]{{#1}{\updownarrow}}
\newcommand{\dom}{\mathit{dom}}
\newcommand{\down}[1]{{#1}{\downarrow}}
\newcommand{\dual}[1]{{#1}^\mathsf{d}}
\newcommand{\EM}{\mathrm{EM}}
\newcommand{\eqem}{\mathrel{=_\updownarrow}}
\newcommand{\eqh}{\mathrel{=_\downarrow}}
\newcommand{\eqs}{\mathrel{=_\uparrow}}
\newcommand{\Ho}{\mathrm{H}}
\newcommand{\icpl}[1]{{\sim}{#1}}
\newcommand{\Id}{\mathit{Id}}
\newcommand{\ii}{\Cap}
\newcommand{\iI}{\raisebox{-0.5ex}{\Large$\ii$}}
\newcommand{\iiatoms}{{\mathsf{A}_\ii}}
\newcommand{\iione}{{1_\ii}}
\newcommand{\iu}{\Cup}
\newcommand{\iU}{\raisebox{-0.5ex}{\Large$\iu$}}
\newcommand{\iuatoms}{{\mathsf{A}_\iu}}
\newcommand{\iuone}{{1_\iu}}
\newcommand{\kleisli}{\Pow}
\newcommand{\Mult}{M}
\newcommand{\Pow}{\mathcal{P}}
\newcommand{\Rel}{\mathbf{Rel}}
\newcommand{\rto}{\leftrightarrow}
\newcommand{\seq}{\ast}
\newcommand{\seqint}{\odot}
\newcommand{\Set}{\mathbf{Set}}
\newcommand{\Sm}{\mathrm{S}}
\newcommand{\subem}{\mathrel{\sqsubseteq_\updownarrow}}
\newcommand{\subh}{\mathrel{\sqsubseteq_\downarrow}}
\newcommand{\subs}{\mathrel{\sqsubseteq_\uparrow}}
\newcommand{\syq}[2]{{#1} \div {#2}}
\newcommand{\up}[1]{{#1}{\uparrow}}
\newtheorem{theorem}{Theorem}[section]
\newtheorem{proposition}[theorem]{Proposition}
\newtheorem{lemma}[theorem]{Lemma}
\theoremstyle{definition}
\newtheorem{example}[theorem]{Example}
\newtheorem{remark}[theorem]{Remark}
\begin{document}

\title{On the Inner Structure of Multirelations}
\author{Hitoshi Furusawa, Walter Guttmann and Georg Struth}
\maketitle

\begin{abstract}
  Binary multirelations form a model of alternating nondeterminism useful for analysing games, interactions of computing systems with their environments or abstract interpretations of probabilistic programs.
  We investigate this alternating structure with inner or demonic and outer or angelic choices in a relation-algebraic language extended with specific operations on multirelations that relate to the inner layer of alternation.
\end{abstract}

\section{Introduction}
\label{section.introduction}

This is the first article in a trilogy on the inner structure of multirelations, the determinisation of such relations~\cite{FurusawaGuttmannStruth2023b} and their algebras of modal operators~\cite{FurusawaGuttmannStruth2023c}.

Multirelations -- morphisms of type $X \rto \Pow Y$ in the category $\Rel$ -- are models of alternating nondeterminism.
Elements $(a,B)$, $(a,C)$ of a multirelation can be interpreted as an outer nondeterministic or angelic choice between the subsets $B$ or $C$ of $Y$ that depends on the element $a$ of $X$, or as an outer nondeterministic evolution of a system from state $a$ into the sets of states $B$ or $C$.
An element $(a,B)$, in turn, can model the inner nondeterministic or demonic choices between the elements of $B$ that depend on $a$, or an inner nondeterministic evolution from state $a$ to any state in $B$.
Multirelations have therefore been used as semantics for logics for games~\cite{Parikh1983,Parikh1985,PaulyParikh2003,BenthemGhoshLiu2008,BellierBenerecettiMonicaMogavero2023}, for systems with alternating angelic/demonic nondeterminism~\cite{BackWright1998,CavalcantiWoodcockDunne2006,MartinCurtisRewitzky2007}, for systems with alternating forms of concurrency~\cite{Peleg1987} or for abstract interpretations of probabilistic programs~\cite{McIverWeber2005,Weber2008,Tsumagari2012}.

This article contributes to a line of work on algebras of multirelations~\cite{Guttmann2014,FurusawaStruth2015a,BerghammerGuttmann2015,FurusawaStruth2016,BerghammerGuttmann2017} and algebraic languages for these~\cite{FurusawaKawaharaStruthTsumagari2017}, with specific operations for multirelations.
A notable example of an operation on multirelations is their Peleg composition~\cite{Peleg1987}: if $R : X \rto \Pow Y$ relates any $a$ in $X$ with a subset $B$ of $Y$ and if $S : Y \rto \Pow Z$ relates each $b \in B$ with a subset $C_b$ of $Z$, then $R \seq S : X \rto \Pow Z$ relates $a$ with the union of all the $C_b$.
A typical operation on the inner or demonic structure is Peleg's parallel composition of multirelations~\cite{Peleg1987}: if $R : X \rto \Pow Y$ and $S : X \rto \Pow Y$ relate any $a$ in $X$ with subsets $B$ and $C$ of $Y$, respectively, then $R \iu S$ relates $a$ with the inner or demonic choice $B \cup C$.
We refer to this inner operation more neutrally as the inner union of $R$ and $S$.

Further inner operations -- an inner intersection, complementation and duality -- have been defined by Rewitzky~\cite{Rewitzky2003,RewitzkyBrink2006}.
An inner up-closure operation -- if $R$ relates $a$ with $B$ and $B \subseteq C$, then $R$ relates $a$ with $C$ -- plays a key role in Parikh's game logic~\cite{Parikh1983}.
In an up-closed multirelation, each set of inner choices from any given element can be weakened to any superset with more inner choices.
Rewitzky has added a dual down-closure operation, which supports strengthening inner choices to sets with fewer inner choices.
She has also defined an inner preorder, akin to the Smyth preorder of domain theory, which relates $R$ to $S$ if the up-closure of $R$ is contained in that of $S$ and thus compares the inner nondeterminism of these multirelations.

Here, we add new results about the inner structure, the study of which was previously mainly targeted at games and up-closed multirelations.
We close multirelations and multirelations up-to preorder equivalence to quantales and Peleg composition, using tools and techniques from universal algebra.
We also introduce a notion of convex closure, as the intersection of up- and down-closure, together with a corresponding preorder and equivalence, and study their properties.
In Proposition~\ref{proposition.iuiiquantale1} we show that homsets of multirelations form commutative quantales with either inner union or inner intersections as monoidal multiplication.
These are isomorphic with respect to the duality induced by inner complementation, which replaces each set $B$ in each pair $(a,B)$ by its boolean complement.
In Proposition~\ref{proposition.iuiiquantale2} we prove that the up-closed and the down-closed elements in each homset form isomorphic subquantales of the double quantale on the entire homset, in which the inner intersection and the inner union collapses to (outer) intersection, respectively, while the convex-closed elements form an inf-lattice.
In Proposition~\ref{proposition.quotient-quantale} we demonstrate that the quotient quantales on each homset with respect to the equivalences generated by the three preorders on multirelations are isomorphic to the quantales on up-, down- and convex-closed multirelations, respectively.
In addition, we show in Section~\ref{subsection.preorder-special} that the inner preorders become partial orders, and even natural orders with a lattice structure, on certain subclasses of multirelations, and that they coincide on deterministic multirelations.

Up-closed and convex-closed multirelations are relevant to game logics and abstract interpretations of probabilistic programs, respectively.
Down-closure is needed for defining convex-closure, characterising deterministic multirelations in the second article in this trilogy and modal operators on multirelations in the third article~\cite{FurusawaGuttmannStruth2023b,FurusawaGuttmannStruth2023c}.

The interactions of the operations studied in this trilogy are quite complex.
We therefore consider them in concrete extensions and enrichments of $\Rel$, but with a view towards future axiomatic approaches, and generally aim at algebraic proofs.

The technical results in this trilogy of articles have benefitted greatly from working with the Isabelle/HOL proof assistant.
In support of them we have developed a substantial library for multirelations~\cite{GuttmannStruth2023}, which extends a previous one~\cite{FurusawaStruth2015b} from single-homset multirelations to $\Rel$ and adds new results about the inner structure and beyond.
While we have used this library to verify or falsify many conjectures related to this article and to increase our confidence in the correctness of our own definitions and proofs, we did not aim at a complete formalisation.
This article is therefore self-contained without the Isabelle libraries, and not about formalised mathematics.


\section{Relations and Multirelations}
\label{section.relation-multirelation}

We start with recalling the basics of binary relations and multirelations.
See~\cite{Peleg1987,Goldblatt1992,FurusawaStruth2015a,FurusawaStruth2016,FurusawaKawaharaStruthTsumagari2017} for details.
Our algebraic language of concrete relations and multirelations is based on enrichments of the category $\Rel$, with sets as objects and binary relations as arrows.
Among such enrichments are regular categories~\cite{Grillet1971} and Dedekind categories~\cite{OlivierSerrato1980}, but our language is more closely related to relation-algebraic approaches~\cite{SchmidtStroehlein1989,FreydScedrov1990,Schmidt2011}, quantales~\cite{Rosenthal1996} and their extensions with multirelational concepts~\cite{FurusawaKawaharaStruthTsumagari2017}.
We therefore start from concrete definitions in $\Rel$, develop algebraic laws for them and then use algebraic reasoning as much as possible.

The relational calculus is rich and well documented.
Multirelations add a further layer of complexity which is much less explored.
This richness sometimes prevents us from listing all properties used in calculations and proofs -- we often refer to ``standard'' relational properties instead.
We provide a dependency list of relational and multirelational concepts with respect to a small basis in Appendix~\ref{section.basis}.

\subsection{Binary relations}
\label{subsection.binary-relations}

We consider binary relations as arrows in the category $\Rel$ and write $X \rto Y$ for the homset $\Rel(X,Y)$.
The composition of arrows $R : X \rto Y$ and $S : Y \rto Z$ is relational composition $R S = \{ (a,b) \mid \exists c .\ R_{a,c} \wedge S_{c,b} \}$; identity arrows are relations $\Id_X = \{ (a,a) \mid a \in X \}$.
We compose arrows of categories in diagrammatic order, against the direction of function composition, but in the direction of relational composition.
We often drop indices, writing $\Id$ for $\Id_X$ and likewise.

Each homset $\Rel(X,Y)$ forms a complete atomic boolean algebra, and relational composition preserves arbitrary sups in both arguments.
We write $\emptyset_{X,Y}$ for the least and $U_{X,Y}$ for the greatest element in $X \rto Y$, $-R$ for the complement of $R$ and $S - R$ for the relative complement $S \cap -R$.

The relation $R : X \rto X$ is a \emph{test} if $R \subseteq \Id$.
Relational composition of tests is intersection.
Tests form a full subalgebra of $\Rel(X,X)$ for any $X$, a complete atomic boolean algebra.

We consider the following additional basic operations on relations:
\begin{itemize}
\item The \emph{converse} of $R : X \rto Y$ is $R^\converse : Y \rto X, R \mapsto \{ (b,a) \mid R_{a,b} \}$.
\item The \emph{domain} of $R : X \rto Y$ is the test $\dom(R) = \{ (a,a) \mid \exists b .\ R_{a,b} \}$ in $X \rto X$.
      It satisfies $\dom(R) = \Id_X \cap R R^\converse = \Id_X \cap R U_{Y,X}$.
\item The \emph{left residual} of $T : X \rto Z$ and $S : Y \rto Z$ is given by $T / S = \bigcup \{ R : X \rto Y \mid R S \subseteq T \}$.
\item The \emph{right residual} $T \backslash S : X \rto Y$ is given by $T \backslash S = (S^\converse / T^\converse)^\converse$ for $T : Z \rto X$ and $S : Z \rto Y$.
\item The \emph{symmetric quotient} is $\syq{T}{S} = (T \backslash S) \cap (T^\converse / S^\converse)$.
\end{itemize}
Tests and domain elements form the same subalgebras.
The residuals are right adjoints of relational composition.

We also need the following special relations:
\begin{itemize}
\item the \emph{membership relation} $\in_Y : Y \rto \Pow Y$,
\item the \emph{subset relation} $\Omega_Y = {\in_Y} \backslash {\in_Y} = \{ (A,B) \mid A \subseteq B \subseteq Y \}$,
\item the \emph{complementation relation} $C = \syq{\in_Y}{-{\in_Y}} = \{ (A,-A) \mid A \subseteq Y \}$.
\end{itemize}

We use the following properties of relations.
Relation $R : X \rto Y$ is
\begin{itemize}
\item \emph{total} if $\dom(R) = \Id_X$, or equivalently $\Id_X \subseteq R R^\converse$,
\item \emph{univalent}, or a \emph{partial function}, if $R^\converse R \subseteq \Id_Y$,
\item \emph{deterministic}, or a \emph{function}, if it is total and univalent.
\end{itemize}
Functions as deterministic relations in $\Rel$ are of course graphs of functions in $\Set$.

Finally, we write $R|_A$ for the restriction of relation $R$ to domain elements in the set $A$, $R(A)$ for the relational image of $A$ under $R$ and $R(a)$ for $R(\{a\})$.

Relations decompose into unions of partial functions.
Each partial function contains one particular choice of codomain element (as a singleton set) for each domain element with a non-empty relational image.
For $R, S : X \rto Y$, we write $S \subseteq_d R$ if $S$ is univalent, $\dom(S) = \dom(R)$ and $S \subseteq R$.

\begin{lemma}
  \label{lemma.sub-d}
  Let $R : X \rto Y$.
  Then $R = \bigcup_{S \subseteq_d R} S$.
\end{lemma}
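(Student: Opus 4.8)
The plan is to prove the equality $R = \bigcup_{S \subseteq_d R} S$ by establishing the two inclusions separately. The inclusion $\bigcup_{S \subseteq_d R} S \subseteq R$ is immediate from the definition of $\subseteq_d$: every $S$ with $S \subseteq_d R$ satisfies $S \subseteq R$ by definition, hence so does their union.

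For the reverse inclusion I would argue pointwise. Fix $(a,b) \in R$; it suffices to construct a single $S$ with $S \subseteq_d R$ and $(a,b) \in S$. First, $a \in \dom(R)$ because $(a,b)$ witnesses $\exists b'.\ R_{a,b'}$. Invoking the axiom of choice, for each $a' \in \dom(R)$ with $a' \neq a$ select some $g(a') \in Y$ with $R_{a',g(a')}$ (possible by the definition of $\dom$), and put
$$ S = \{(a,b)\} \cup \{ (a',g(a')) \mid a' \in \dom(R) \text{ and } a' \neq a \} . $$
Then $S \subseteq R$ by construction; $S$ is univalent, i.e.\ $S^\converse S \subseteq \Id_Y$, since it contains exactly one pair with first component $a''$ for each $a'' \in \dom(R)$ and none otherwise; and $\dom(S) = \dom(R)$, since the first components of pairs in $S$ are precisely the elements of $\dom(R)$. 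Therefore $S \subseteq_d R$ and $(a,b) \in S \subseteq \bigcup_{S \subseteq_d R} S$, which gives $R \subseteq \bigcup_{S \subseteq_d R} S$.

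I do not expect a genuine obstacle here: the statement merely formalises the familiar decomposition of a relation into partial functions each retaining the full domain $\dom(R)$. The only points worth care are the appeal to choice in picking the $g(a')$ and the degenerate case $R = \emptyset_{X,Y}$, where $\dom(R) = \emptyset$ and $\emptyset_{X,Y}$ itself satisfies $\emptyset_{X,Y} \subseteq_d R$, so that $\bigcup_{S \subseteq_d R} S = \emptyset_{X,Y} = R$ holds trivially. If a more algebraic rendering were preferred, one could instead observe that a maximal univalent subrelation of $R$ necessarily has domain $\dom(R)$ and that every pair of $R$ lies in some such maximal subrelation, but the explicit construction above is the most direct route.
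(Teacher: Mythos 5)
Your proof is correct: the paper states Lemma~\ref{lemma.sub-d} without proof, and your argument --- trivial inclusion one way, and for the other way a choice-based construction of a univalent subrelation with full domain through any given pair $(a,b) \in R$ --- is exactly the standard decomposition the authors are implicitly relying on. The handling of the empty case and the explicit appeal to choice are both appropriate.
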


\subsection{Multirelations}
\label{SS:multirelations}

A \emph{multirelation} is an arrow $X \rto \Pow Y$ in $\Rel$.
We write $\Mult(X,Y)$ for the homset $X \rto \Pow Y$.

\begin{example}
  The $\in$-relation is a multirelation $X \rto \Pow X$.
  Graphs of nondeterministic functions $X \to \Pow Y$ are deterministic multirelations.
  An instance of this is $\syq{\Id}{\in}$, which relates every element to a singleton set containing it; see units $1_X$ below.
\end{example}

Multirelations can be composed in many ways; see~\cite{FurusawaKawaharaStruthTsumagari2017} for a comparison.
The most relevant to us comes from concurrent dynamic logic~\cite{Peleg1987}.

The \emph{Peleg composition}~\cite{Peleg1987} $\seq : (X \rto \Pow Y) \times (Y \rto \Pow Z) \to (X \rto \Pow Z)$ can be defined in terms of the \emph{Peleg lifting} $(-)_\seq : (X \rto \Pow Y) \to (\Pow X \rto \Pow Y)$ of multirelations~\cite{FurusawaKawaharaStruthTsumagari2017}:
\begin{equation*}
  R \seq S = R S_\seq = \left\{ (a,C) \mid \exists B .\ R_{a,B} \wedge \exists f : Y \to \Pow Z .\ f|_B \subseteq S \wedge C = \bigcup f(B) \right\},
\end{equation*}
where $R_\seq = \{ (A,B) \mid \exists f : X \to \Pow Y .\ f|_A \subseteq R \wedge B = \bigcup f(A) \}$.
In turn, the Peleg lifting satisfies $R_\seq = \dom(R)_\seq \bigcup_{S \subseteq_d R} S_\kleisli$ using the \emph{Kleisli lifting} $(-)_\kleisli : (X \rto \Pow Y) \to (\Pow X \rto \Pow Y)$ given by $R_\kleisli = \syq{{\in} R^\converse {\in}}{\in} = \left\{ (A,B) \mid B = \bigcup R(A) \right\}$.
The units of Peleg composition are the multirelations $1_X = \{ (a,\{a\}) \mid a \in X \}$.

Peleg composition preserves arbitrary unions in its first argument, but only the order in its second one: $R \subseteq S \Rightarrow T \seq R \subseteq T \seq S$.
Thus $\emptyset \seq R = \emptyset$, whereas the right zero law generally fails.
It is not associative either; only $(R \seq S) \seq T \subseteq R \seq (S \seq T)$ holds.
Hence multirelations do not form a category under Peleg composition.
The composition becomes associative if the third factor is univalent or union-closed~\cite{FurusawaKawaharaStruthTsumagari2017} (see also Section~\ref{subsection.union-closed}).


\section{Inner Operations}
\label{section.inner-ops}

The complete atomic Boolean algebra of multirelations $X \rto \Pow Y$ forms an \emph{outer} or \emph{angelic} structure with \emph{outer} operations and properties.
In addition, the boolean algebra $\Pow Y$ on the second components of ordered pairs $(a,A)$ forms a dual \emph{inner} or \emph{demonic} set structure for each $a$, with \emph{inner} operations on multirelations.
The parallel composition of concurrent dynamic logic~\cite{Peleg1987} is an inner union operation; its algebraic properties are well studied~\cite{FurusawaStruth2015a,FurusawaStruth2016}.
A dual inner intersection and an inner complementation that induces this duality have been defined by Rewitzky~\cite{Rewitzky2003}.
She refers to the inner operations as \emph{power union}, \emph{power intersection} and \emph{power negation}.
We now investigate the inner structure at greater detail.

Recall that a \emph{quantale} $(Q,\le,\cdot,1)$ is a complete lattice $(Q,\le)$ and a monoid $(Q,\cdot,1)$ such that $\cdot$ preserves all sups in both arguments, and that quantale morphisms preserve all sups and the monoidal structure~\cite{Rosenthal1990}.
A quantale is \emph{commutative} if $\cdot$ is.

\subsection{Definitions of inner operations}

The \emph{inner union}, \emph{inner intersection}, their units and \emph{inner complementation} are defined, for multirelations $R, S : X \rto \Pow Y$, as
\begin{align*}
  R \iu S & = \{ (a,A \cup B) \mid R_{a,A} \wedge S_{a,B} \}, & \iuone & = \{ (a,\emptyset) \mid a \in X \}, \\
  R \ii S & = \{ (a,A \cap B) \mid R_{a,A} \wedge S_{a,B} \}, & \iione & = \{ (a,Y) \mid a \in X \}, \\
  \icpl{R} & = \{ (a,-A) \mid R_{a,A} \}.
\end{align*}
Algebraically, $\icpl{R} = R C$, where $C$ is the complementation relation from Section~\ref{subsection.binary-relations}.
Further, $\iione = 1 \iu \icpl{1}$ and $\iuone = 1 \ii \icpl{1}$.

\begin{remark}
  \label{rem.basis}
  We do not know relation-algebraic definitions of $\iu$ or $\ii$ and need to add at least one of them, for instance $\iu$, in our multirelational language.
  See also Appendix~\ref{section.basis}.
\end{remark}

\subsection{Algebra of inner operations}
\label{subsection.inner-algebra}

The interaction of $\iu$ with $\seq$ and the outer operations is well known~\cite{FurusawaStruth2015a,FurusawaStruth2016}.
Interactions of $\ii$ usually follow by duality with respect to $\icpl{}$.

\begin{proposition}
  \label{proposition.iuiiquantale1}
  $\Mult_\iu(X,Y) = (\Mult(X,Y),\subseteq,\iu,\iuone)$ and $\Mult_\ii(X,Y) = (\Mult(X,Y),\subseteq,\ii,\iione)$ are commutative quantales.
  Inner complementation $\icpl{} : \Mult_\iu(X,Y) \to \Mult_\ii(X,Y)$ is a quantale isomorphism.
  It is involutive and thus its own inverse.
\end{proposition}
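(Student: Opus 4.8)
The plan is to verify the quantale axioms for $\Mult_\iu(X,Y)$ directly and then obtain everything about $\Mult_\ii(X,Y)$ for free by transporting the structure along $\icpl{}$, exploiting that $\ii$ is the $\icpl{}$-dual of $\iu$. (Alternatively, the axioms for $\Mult_\ii(X,Y)$ could be checked by the same computations with $\cap$ in place of $\cup$ and $Y$ in place of $\emptyset$.)

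For $\Mult_\iu(X,Y)$: the poset $(\Mult(X,Y),\subseteq)$ is a complete lattice because every homset of $\Rel$ is a complete atomic boolean algebra. Commutativity and associativity of $\iu$ are inherited pointwise from those of $\cup$ on inner sets, and $R \iu \iuone = R$ since $A \cup \emptyset = A$ and $\iuone$ consists exactly of the pairs $(a,\emptyset)$ for $a \in X$; hence $(\Mult(X,Y),\iu,\iuone)$ is a commutative monoid. For preservation of arbitrary sups it suffices, by commutativity, to treat the first argument: unfolding definitions, $(\bigcup_i R_i) \iu S = \{ (a, A \cup B) \mid (\exists i.\ (R_i)_{a,A}) \wedge S_{a,B} \} = \bigcup_i (R_i \iu S)$, where the index set may be empty, in which case both sides equal $\emptyset$. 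Thus $\Mult_\iu(X,Y)$ is a commutative quantale.

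For $\icpl{}$: since $\icpl{R} = R C$ with $C$ the complementation relation and $C C = \Id_{\Pow Y}$ (as $--A = A$ for $A \subseteq Y$), we get $\icpl{\icpl{R}} = R$, so $\icpl{}$ is an involution, in particular a bijection equal to its own inverse. As relational composition preserves arbitrary sups, so does $\icpl{}$; being its own inverse, it is therefore an order isomorphism for $\subseteq$. De~Morgan gives $\icpl{R \iu S} = \{ (a, -(A \cup B)) \mid R_{a,A} \wedge S_{a,B} \} = \{ (a, (-A) \cap (-B)) \mid R_{a,A} \wedge S_{a,B} \} = \icpl{R} \ii \icpl{S}$, since the pairs of $\icpl{R}$ are exactly the $(a,-A)$ with $R_{a,A}$; and $\icpl{\iuone} = \{ (a, -\emptyset) \mid a \in X \} = \iione$. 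Transporting the commutative-monoid and complete-lattice structure of $\Mult_\iu(X,Y)$ along the bijection $\icpl{}$ then shows that $\Mult_\ii(X,Y)$ is a commutative quantale and that $\icpl{}$ is a quantale isomorphism whose inverse is itself.

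The argument is essentially routine bookkeeping at the level of sets; the one point deserving care is that $\iu$, and hence $\ii$, distributes over \emph{arbitrary} sups in \emph{both} arguments — in contrast to Peleg composition $\seq$, which preserves sups only in its first argument — together with the attendant side conditions (for the unit law of $\ii$ one uses that every pair $(a,A)$ of a multirelation in $\Mult(X,Y)$ satisfies $A \subseteq Y$, so $A \cap Y = A$). Everything else transfers mechanically through the involution $\icpl{}$.
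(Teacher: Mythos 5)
Your proof is correct and follows essentially the same route as the paper: establish the quantale structure of $\Mult_\iu(X,Y)$, show that $\icpl{}$ is an involutive sup-preserving bijection exchanging $\iu$ with $\ii$ and $\iuone$ with $\iione$, and transport the structure to $\Mult_\ii(X,Y)$. The only difference is that you verify the quantale axioms for $\Mult_\iu(X,Y)$ directly (correctly, including the empty-index case and the unit laws), whereas the paper cites prior work for that part.
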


\begin{proof}
  The quantale structure of $\Mult_\iu(X,Y)$ has been checked in~\cite{FurusawaStruth2016}; that of $\Mult_\ii(X,Y)$ follows from the isomorphism we establish next.
  First, $\icpl{}$ is clearly involutive and surjective.
  Second, it is injective because $\icpl{R} = \icpl{S}$ implies $\icpl{\icpl{R}} = \icpl{\icpl{S}}$ and thus $R = S$.
  Third, it preserves inner union, inner intersection, units and arbitrary unions:
  \begin{gather*}
    \icpl{(R \iu S)} = \icpl{R} \ii \icpl{S}, \qquad
    \icpl{(R \ii S)} = \icpl{R} \iu \icpl{S}, \qquad
    \icpl{\iuone} = \iione, \qquad
    \icpl{\iione} = \iuone, \\
    \textstyle \icpl{\bigcup \mathcal{R}} = \bigcup \{ \icpl{R} \mid R \in \mathcal{R} \}.
    \qedhere
  \end{gather*}
\end{proof}

We call $\icpl{}$ the \emph{inner isomorphism} or \emph{inner duality}, in contrast to the \emph{outer isomorphism} or \emph{outer duality} given by boolean complementation $-$.
Properties of $\ii$ thus translate from those of $\iu$ via inner duality, and vice versa.

\begin{remark}
  The quantales $\Mult_\iu(X,Y)$ and $\Mult_\ii(X,Y)$, as powerset structures, are boolean, atomic and completely distributive.
  The inner isomorphism preserves the boolean structure, $\icpl{-R} = -\icpl{R}$, as well as arbitrary intersections.
  In particular, $\icpl{\emptyset} = \emptyset$ and $\icpl{U} = U$, and zero laws $R \iu \emptyset = \emptyset$ and $R \ii \emptyset = \emptyset$ follow immediately from union preservation.
\end{remark}

While $R \subseteq R \iu R$, and dually $R \subseteq R \ii R$, inner union and intersection need not be idempotent and thus do not impose a semilattice structure on $\Mult(X,Y)$.
Thus neither $\Mult_\iu(X,Y)$ nor $\Mult_\ii(X,Y)$ forms a frame or locale, and the quantale order $\subseteq$ is not the natural order on $\iu$ or $\ii$.

\begin{example}
  \label{example.not-idempotent}
  For $R = \{ (a,\{a\}), (a,\{b\}) \}$, $R \iu R = R \cup \{ (a,\{a,b\}) \}$ and $R \ii R = R \cup \{ (a,\emptyset) \}$.
\end{example}

\begin{example}
  \label{example.idempotents}
  The greatest elements $U_{X,Y}$ are idempotents of $\iu$ and $\ii$.
  As in any semigroup, this induces subalgebras in which the $U_{X,Y}$ appear as units.
  The fixpoints of $(-) \iu U$ are precisely the up-closed multirelations~\cite{FurusawaStruth2016}, which appear in game logic~\cite{Parikh1983}.
  By inner duality, the fixpoints of $(-) \ii U$ yield down-closed multirelations.
\end{example}

The subalgebras arising from the idempotents $U$ are studied in Section~\ref{section.up-down}.
Partial functions yield additional idempotents of the inner structure.

\begin{lemma}
  \label{lemma.pfun-idem}
  If $R : X \rto \Pow Y$ is univalent, then $R \iu R = R = R \ii R$.
\end{lemma}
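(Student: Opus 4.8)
The plan is to reason pointwise from the definitions of $\iu$ and $\ii$, since Remark~\ref{rem.basis} notes that neither operation has a relation-algebraic definition. The only fact about univalence I need is its unfolded form: $R^\converse R \subseteq \Id$ says exactly that for each $a \in X$ there is at most one $A \subseteq Y$ with $R_{a,A}$.

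First I would treat $\iu$. Unfolding, $R \iu R = \{ (a, A \cup B) \mid R_{a,A} \wedge R_{a,B} \}$; by univalence the conjunction $R_{a,A} \wedge R_{a,B}$ can hold only when $A = B$, in which case $A \cup B = A$. Hence $R \iu R = \{ (a,A) \mid R_{a,A} \} = R$. Equivalently, since $R \subseteq R \iu R$ holds for all multirelations, as recalled just before Example~\ref{example.not-idempotent}, it suffices to check $R \iu R \subseteq R$, which is precisely the inclusion just argued.

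The case of $\ii$ is symmetric: $R \ii R = \{ (a, A \cap B) \mid R_{a,A} \wedge R_{a,B} \}$, and univalence again forces $A = B$, so that $A \cap B = A$ and $R \ii R = R$. Alternatively it follows from the $\iu$ case by the inner duality of Proposition~\ref{proposition.iuiiquantale1}: the complementation relation $C$ is univalent, so $\icpl{R} = RC$ is univalent whenever $R$ is, whence $\icpl{R} \iu \icpl{R} = \icpl{R}$; applying $\icpl{}$ and using $\icpl{(\icpl{R} \iu \icpl{R})} = R \ii R$ then yields $R \ii R = R$.

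There is essentially no obstacle here beyond bookkeeping. The only point worth a little care is that ``univalent'' refers to $R$ as an arrow $X \rto \Pow Y$ in $\Rel$ -- at most one second component per first component -- and not to any property of the inner boolean structure on the sets $A$; once that is kept in mind, both identities are immediate.
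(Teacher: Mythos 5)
Your argument is correct. The paper states this lemma without proof, and your pointwise reasoning is exactly the intended observation: univalence of $R$ as an arrow $X \rto \Pow Y$ means each $a$ has at most one related set $A$, so in $\{(a,A\cup B)\mid R_{a,A}\wedge R_{a,B}\}$ the sets $A$ and $B$ must coincide, and likewise for $\ii$. Your alternative derivation of the $\ii$ case via the inner duality $\icpl{}$ (using that $C$ is a function, so $\icpl{R}=RC$ inherits univalence) is also sound and consistent with how the paper routinely transfers $\iu$-facts to $\ii$-facts.
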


\begin{example}
  The converse does not hold: any $R = \{ (a,A), (a,B) \}$ with $A \subset B$ is idempotent with respect to $\iu$ and $\ii$, but not univalent.
\end{example}

\begin{remark}
  The relationship between $\iu$ and $\ii$ with $\subseteq$ differs from that of the outer operations.
  Implications between $R \subseteq S$, $R \iu S = S$, $R \iu S = R$, $R \ii S = S$ and $R \ii S = R$ can be refuted using small multirelations built from $(a,\emptyset)$, $(a,\{a\})$ and $\emptyset$.
  We obtain $(R \ii S) \iu T \subseteq (R \iu T) \ii (S \iu T)$ and $(R \iu S) \ii T \subseteq (R \ii T) \iu (S \ii T)$, but these properties do not imply order-preservation.
  This further confirms that $\subseteq$ is not the natural order with respect to $\iu$ or $\ii$.
\end{remark}

A \emph{dual} operation can be defined on multirelations as $\dual{R} = -\icpl{R} = -R C$~\cite{Parikh1983,Rewitzky2003}.
It combines the inner and outer one.
It follows that $(-)^d$ is $\subseteq$-reversing and satisfies
\begin{gather*}
  \icpl{R} = -\dual{R}, \qquad
  \dual{\dual{R}} = R, \qquad
  \dual{(R \cap S)} = \dual{R} \cup \dual{S}, \qquad
  \dual{(R \cup S)} = \dual{R} \cap \dual{S}, \\
  \dual{(-R)} = -(\dual{R}), \qquad
  \dual{(\icpl{R})} = \icpl{(\dual{R})}.
\end{gather*}

\subsection{Union-closure}
\label{subsection.union-closed}

Inner union and Peleg composition interact as follows~\cite{FurusawaStruth2015a}:
\begin{gather*}
  (R \iu S) \seq T \subseteq (R \seq T) \iu (S \seq T), \qquad
  R \seq (S \iu T) \subseteq (R \seq S) \iu (R \seq T), \\
  T \iu T \subseteq T \Rightarrow (R \iu S) \seq T = (R \seq T) \iu (S \seq T).
\end{gather*}
The distributivity law generalises.
We define
\begin{equation*}
  \underset{i \in I}{\iU} R_i = \left\{ \left( a, \bigcup_{i \in I} A_i \right) \middle\vert \ \forall i \in I .\ (a,A_i) \in R_i \right\}
\end{equation*}
and call a multirelation $R$ \emph{union-closed} (or \emph{additive}~\cite{Rewitzky2003}) if $\iU_{i \in I} R \subseteq R$ for all $I \neq \emptyset$, or equivalently, if $\dom(S) (\syq{{\in} S^\converse}{\in}) \subseteq R$ for all $S \subseteq R$~\cite{FurusawaKawaharaStruthTsumagari2017}.
Then, for union-closed $S$,
\begin{equation*}
  (\underset{i \in I}{\iU} R_i) \seq S = \underset{i \in I}{\iU} (R_i \seq S).
\end{equation*}

\subsection{Inner determinism, inner univalence}
\label{subsection.atoms}

Relation $U_{X,Y}$ is mapped by $(-) 1_Y$ to
\begin{equation*}
  \iuatoms = U_{X,Y} 1_Y = \{ (a,\{b\}) \mid a \in X \wedge b \in Y \},
\end{equation*}
the set of all (multirelational) \emph{atoms} in $\Mult(X,Y)$.
By inner duality,
\begin{equation*}
  \iiatoms = \{ (a,X-\{b\}) \mid a \in X, b \in Y \}
\end{equation*}
is the set of all \emph{co-atoms}.
Of course, $\icpl{\iuatoms} = \iiatoms$ and $\icpl{\iiatoms} = \iuatoms$.
Atoms allow expressing inner analogues to (outer) determinism, univalence and totality.

Multirelation $R : X \rto Y$ is
\begin{itemize}
\item \emph{inner-univalent} if $R \subseteq \iuatoms \cup \iuone$, that is, $B$ is either a singleton or empty for each $(a,B) \in R$,
\item \emph{inner-total} if $R \subseteq -\iuone$, that is, $B$ is non-empty for each $(a,B) \in R$,
\item \emph{inner deterministic} if it is inner univalent and inner total, that is, $B \subseteq Y$ is a singleton set whenever $R_{a,B}$ for some $a \in X$.
\end{itemize}
Inner deterministic multirelations are obviously subsets of $\iuatoms$.

Inner univalent multirelations thus admit only outer or angelic choices, but not inner ones; they are therefore completely angelic.
Outer univalent multirelations, by contrast, admit only inner or demonic choices, but not outer ones; they are therefore completely demonic.
Inner deterministic multirelations can then be seen as strictly angelic, as all inner choices must be non-empty, and outer deterministic multirelations as strictly demonic, as empty outer choices are impossible.
Inner total multirelations have been called \emph{total}, outer total multirelations \emph{proper}, inner univalent multirelations \emph{angelic} and outer univalent multirelations \emph{demonic} in~\cite{Rewitzky2003,RewitzkyBrink2006}.

The inner univalent, total and deterministic multirelations satisfy fixpoint properties.

\begin{lemma}
  \label{lemma.det-fix-ref}
  ~
  \begin{enumerate}
  \item The inner univalent multirelations are the fixpoints of $(-) \cap (\iuatoms \cup \iuone)$.
  \item The inner total multirelations are the fixpoints of $(-) - \iuone$.
  \item The inner deterministic multirelations are the fixpoints of $(-) \cap \iuatoms$ and $(-) 1^\converse 1$.
  \end{enumerate}
\end{lemma}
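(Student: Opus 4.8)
The plan is to reduce all three items to one elementary observation about the boolean algebra structure of $\Mult(X,Y)$: for a fixed multirelation $M : X \rto \Pow Y$, a multirelation $R$ is a fixpoint of $(-) \cap M$ exactly when $R \subseteq M$, and, since $S - R = S \cap -R$, a fixpoint of $(-) - N$ exactly when $R \subseteq -N$. For item~1 I would instantiate $M = \iuatoms \cup \iuone$: by definition $R$ is inner univalent iff $R \subseteq \iuatoms \cup \iuone$, which is precisely being a fixpoint of $(-) \cap (\iuatoms \cup \iuone)$. For item~2 I would instantiate $N = \iuone$: $R$ is inner total iff $R \subseteq -\iuone$, i.e.\ a fixpoint of $(-) - \iuone$.

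For item~3 I would first record that $\iuatoms$ (pairs $(a,\{b\})$) and $\iuone$ (pairs $(a,\emptyset)$) are disjoint, so $(\iuatoms \cup \iuone) \cap -\iuone = \iuatoms$; hence $R$ is inner deterministic, that is inner univalent and inner total, iff $R \subseteq \iuatoms$, giving the characterisation via $(-) \cap \iuatoms$ as above. For the second characterisation I would compute the relational composite $1^\converse 1 : \Pow Y \rto \Pow Y$ with $1 = 1_Y = \{(b,\{b\}) \mid b \in Y\}$, obtaining $1^\converse 1 = \{(\{b\},\{b\}) \mid b \in Y\}$, the restriction of $\Id_{\Pow Y}$ to singletons. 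Consequently $R 1^\converse 1 = \{(a,B) \in R \mid B \text{ a singleton}\} = R \cap \iuatoms$, so $R = R 1^\converse 1$ iff $R \subseteq \iuatoms$ iff $R$ is inner deterministic.

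The argument is essentially bookkeeping with definitions and boolean identities, so there is no genuine obstacle; the only place needing a short calculation is identifying $1^\converse 1$ with the sub-identity on singletons of $\Pow Y$, and hence $R 1^\converse 1$ with $R \cap \iuatoms$, so that is where I would be most careful. If a fully point-free derivation is preferred, one can instead start from $1 = \syq{\Id}{\in}$ and push the computation through with standard residual and symmetric-quotient laws, but the pointwise argument seems cleanest here.
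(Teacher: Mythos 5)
Your proposal is correct and follows essentially the same route as the paper: items (1) and (2) and the first characterisation in (3) are immediate from the definitions as subset conditions, and the only substantive step is the identity $R 1^\converse 1 = R \cap \iuatoms$. The paper obtains that identity point-free from the law $P Q \cap S = (P \cap S Q^\converse) Q$ for univalent $Q$ with $Q = 1$, whereas you compute $1^\converse 1$ pointwise as the sub-identity on singletons of $\Pow Y$; both are valid and the difference is purely stylistic.
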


\begin{proof}
  We only prove that $R 1^\converse 1 = R$ if and only if $R \cap \iuatoms = R$ for any multirelation $R$.
  This follows immediately from the relational law $P Q \cap S = (P \cap S Q^\converse) Q$ for outer univalent $Q$~\cite{SchmidtStroehlein1989}, instantiated with $Q = 1$: $\iuatoms \cap R = U 1 \cap R = (U \cap R 1^\converse) 1 = R 1^\converse 1$.
\end{proof}

\begin{lemma}
  \label{lemma.inner-det-props}
  Let $R$, $S$, $T$ be multirelations of appropriate types and $R$ inner deterministic.
  Then
  \begin{enumerate}
  \item $R \seq S = R 1^\converse S$,
  \item $R \seq (S \seq T)= (R \seq S) \seq T$,
  \end{enumerate}
\end{lemma}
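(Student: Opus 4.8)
The plan is to reduce both claims to associativity of \emph{ordinary} relational composition, using two facts: Peleg composition is relational composition after the Peleg lifting of its second argument, $R \seq S = R S_\seq$, and an inner deterministic multirelation $R : X \rto \Pow Y$ satisfies $R = R 1^\converse 1_Y$ by Lemma~\ref{lemma.det-fix-ref}(3).

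For part (1), I would start from $R = R 1^\converse 1_Y$ and unfold:
\[
  R \seq S = (R 1^\converse 1_Y) \seq S = (R 1^\converse 1_Y) S_\seq = (R 1^\converse)(1_Y S_\seq) = R 1^\converse (1_Y \seq S) = R 1^\converse S .
\]
Here the middle equality is associativity of relational composition --- the factors $R 1^\converse : X \rto Y$, $1_Y : Y \rto \Pow Y$ and $S_\seq : \Pow Y \rto \Pow Z$ are all ordinary arrows of $\Rel$ --- and the last equality is the unit law $1_Y \seq S = S$ for Peleg composition. For part (2), I would apply part (1) twice: using $R \seq S = R 1^\converse S$ together with the definition of Peleg composition,
\[
  (R \seq S) \seq T = (R 1^\converse S) T_\seq = R 1^\converse (S T_\seq) = R 1^\converse (S \seq T) = R \seq (S \seq T) ,
\]
where the second equality is again associativity of relational composition (now of $R 1^\converse$, $S$ and $T_\seq$), the third is the definition of Peleg composition, and the last is part (1) with $S \seq T$ in the role of the second argument.

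I do not expect a real obstacle once Lemma~\ref{lemma.det-fix-ref} is available; the only care needed is type-checking --- confirming that in each reassociation step every factor is a genuine arrow of $\Rel$ composed relationally, so that associativity of relational composition genuinely applies, and that the two instances of part (1) are used at the correct types. It is reassuring that part (2) refines the general inclusion $(R \seq S) \seq T \subseteq R \seq (S \seq T)$ to an equality precisely when the first factor is inner deterministic. A more elementary proof of part (1) would instead unfold the set-builder definition of $\seq$ and observe that, since every $(a,B) \in R$ has $B$ a singleton, the choice function in the definition of $R \seq S$ is forced, so that $R \seq S = \{ (a,C) \mid \exists b .\ R_{a,\{b\}} \wedge S_{b,C} \}$, which is exactly $R 1^\converse S$; but the algebraic route is shorter and is what the placement of Lemma~\ref{lemma.det-fix-ref} suggests.
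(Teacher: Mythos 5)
Your proof is correct and follows essentially the same route as the paper's: reduce to $R = R\,1^\converse 1$ via Lemma~\ref{lemma.det-fix-ref}(3), unfold $\seq$ as relational composition with the Peleg lifting of the second argument, reassociate in $\Rel$, and use the unit law $1 \seq S = S$; part (2) is then the same chain the paper gives, merely written in the opposite direction.
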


\begin{proof}
  For (1), $R \seq S = R 1^\converse 1 S_\seq = R 1^\converse (1 \seq S) = R 1^\converse S$, using Lemma~\ref{lemma.det-fix-ref} in the first step.
  For (2), $R \seq (S \seq T) = R 1^\converse (S \seq T) = R 1^\converse S T_\seq = (R 1^\converse S) \seq T = (R \seq S) \seq T$ using (1).
\end{proof}

We mention the following properties without proof (a formal verification can be found in our Isabelle theories).

\begin{lemma}
  \label{lemma.ii-iu-preservation}
  Inner unions preserve outer univalence, inner and outer totality, and outer determinism; inner intersections preserve inner and outer univalence, outer totality and outer determinism.
\end{lemma}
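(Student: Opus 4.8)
The plan is to split the eight assertions into three that hold for both $\iu$ and $\ii$ --- preservation of outer univalence, outer totality, and hence outer determinism --- and two that are one-sided, namely inner totality for $\iu$ and inner univalence for $\ii$. For the common three I would prove them for $\iu$ and then transfer them to $\ii$ using Proposition~\ref{proposition.iuiiquantale1}: $\icpl{}$ is a bijection with $\icpl{(R \iu S)} = \icpl R \ii \icpl S$, and since it is composition with the symmetric bijection $C$ on $\Pow Y$, which merely relabels the second components of the pairs bijectively, it preserves and reflects outer univalence, outer totality and outer determinism (for instance $\dom(RC) = \Id \cap R C C^\converse R^\converse = \Id \cap R R^\converse = \dom(R)$ since $C C^\converse = \Id$); hence each of these three is preserved by $\iu$ iff it is preserved by $\ii$. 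I would also point out that this correspondence does \emph{not} cover the two one-sided claims: the image under $\icpl{}$ of ``$\iu$ preserves inner totality'' is ``$\ii$ preserves the unnamed property of containing no pair $(a,Y)$'', and symmetrically for inner univalence, so each of those must be verified on its own.

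Next I would settle the $\iu$ cases directly from the defining equation $R \iu S = \{ (a, A \cup B) \mid R_{a,A} \wedge S_{a,B} \}$ for $R, S : X \rto \Pow Y$. If $R$ and $S$ are outer univalent, view them as partial functions $r, s$ from $X$ to $\Pow Y$; then $R \iu S$ is the graph of $a \mapsto r(a) \cup s(a)$ on $\dom(R) \cap \dom(S)$, which is again univalent, and if moreover $R$ and $S$ are total this domain is all of $X$, so outer totality and outer determinism are preserved as well. If $R$ and $S$ are inner total, then every $(a, A \cup B) \in R \iu S$ has $A \neq \emptyset$ and hence $A \cup B \neq \emptyset$; equivalently, $(R \iu S) \cap \iuone \subseteq (R \cap \iuone) \iu (S \cap \iuone)$, which is $\emptyset$ by the zero law because $R \cap \iuone = \emptyset = S \cap \iuone$ by Lemma~\ref{lemma.det-fix-ref}(2). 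For the remaining claim, if $R, S \subseteq \iuatoms \cup \iuone$ and $(a, C) \in R \ii S$ with $C = A \cap B$ for witnesses $A, B$, then $C \subseteq A$ and $A$ is a singleton or empty, so $C$ is a singleton or empty and $R \ii S \subseteq \iuatoms \cup \iuone$.

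I do not expect a real obstacle: the whole argument is forced by the elementary behaviour of $\cup$ and $\cap$ on the second components of the pairs, and since $\iu$ and $\ii$ are not relation-algebraically definable (Remark~\ref{rem.basis}) it has to be run on these explicit pointwise descriptions rather than by manipulating relational terms. The one point that needs care is the \emph{direction} of each preservation, which also makes the list tight: $\iu$ does not preserve inner univalence and $\ii$ does not preserve inner totality, as witnessed by $R = \{ (a,\{a\}), (a,\{b\}) \}$ from Example~\ref{example.not-idempotent}, for which $R \iu R$ contains $(a,\{a,b\})$ although $R$ is inner univalent, and $R \ii R$ contains $(a,\emptyset)$ although $R$ is inner total. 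Finally, the arguments for $\iu$ extend verbatim to the arbitrary inner union of Section~\ref{subsection.union-closed} over any nonempty index set.
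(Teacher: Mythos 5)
Your proposal is correct. The paper itself gives no proof of this lemma (it explicitly defers to the Isabelle formalisation), so there is nothing to compare against; your argument fills the gap soundly. The division of labour is the right one: the three ``outer'' preservation claims transfer between $\iu$ and $\ii$ via the inner isomorphism because $\icpl{R} = RC$ post-composes with the bijection $C$, which indeed fixes $\dom$ and outer univalence (your computation $\dom(RC) = \Id \cap R C C^\converse R^\converse = \dom(R)$ and $(RC)^\converse(RC) = C^\converse R^\converse R C \subseteq C^\converse C = \Id$ both go through since $C$ is an involutive bijection), whereas the two one-sided claims genuinely do not transfer, since $\icpl{}$ sends $-\iuone$ to $-\iione$ and $\iuatoms \cup \iuone$ to $\iiatoms \cup \iione$, neither of which is the property being preserved. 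Your direct verifications of those two --- $A \cup B \neq \emptyset$ when $A \neq \emptyset$, and $A \cap B$ a singleton or empty when $A$ is --- are exactly what is needed, the inclusion $(R \iu S) \cap \iuone \subseteq (R \cap \iuone) \iu (S \cap \iuone)$ is valid, and your counterexamples showing that $\iu$ fails to preserve inner univalence and $\ii$ fails to preserve inner totality agree with Example~\ref{example.not-idempotent}. The only caveat is presentational: since the paper proves nothing here, one cannot certify that this matches the authors' (machine-checked) route, but the statement is elementary enough that any correct proof must reduce to the pointwise behaviour of $\cup$ and $\cap$ on second components, as yours does.
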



\section{Inner Closures}
\label{section.up-down}

We have mentioned in Example~\ref{example.idempotents} that the fixpoints of $(-) \iu U$ are the up-closed multirelations~\cite{FurusawaStruth2015a}, which play an important role in the semantics of game logics.
The inner isomorphism yields of course a dual notion of down-closure.
We define these notions, add a notion of convex-closure, which appears in the abstract interpretation of probabilistic programs, and discuss the subalgebras induced.

\subsection{Definition of inner closures}

The \emph{(inner) up-closure}, \emph{down-closure} and \emph{convex-closure} of $R : X \rto \Pow Y$ are defined as
\begin{align*}
  \up{R} = R \iu U, \qquad
  \down{R} = R \ii U, \qquad
  \convex{R} = \up{R} \cap \down{R}.
\end{align*}
It is straightforward to check that $\up{(-)}$, $\down{(-)}$ and $\convex{(-)}$ are indeed closure operators.
The subsets of up-, down- and convex-closed multirelations in $\Mult(X,Y)$ are thus
\begin{equation*}
  \Mult_\uparrow(X,Y) = \{ R \mid R \iu U = R \}, \qquad
  \Mult_\downarrow(X,Y) = \{ R \mid R \ii U = R \}, \qquad
  \Mult_\updownarrow(X,Y) = \{ R \mid R = \convex{R} \}.
\end{equation*}

Alternatively, we can use the subset relation $\Omega$, introduced in Section~\ref{subsection.binary-relations}, to define $\up{R} = R \Omega$ and $\down{R} = R \Omega^\converse$.
Expanding definitions shows that
\begin{gather*}
  \up{R} = \{ (a,A) \mid \exists (a,B) \in R .\ B \subseteq A \}, \qquad
  \down{R} = \{ (a,A) \mid \exists (a,B) \in R .\ A \subseteq B \}, \\
  \convex{R} = \{ (a,A) \mid \exists (a,B), (a,C) \in R .\ B \subseteq A \subseteq C \}.
\end{gather*}

As already mentioned, inner-closed multirelations offer greater flexibility with inner choices.
Up-closed multirelations allow weakening inner choices in that one can always add options to any given set of inner choices.
Likewise, with down-closed multirelations one can always strengthen inner choices by disregarding options in any given set.
Convex-closed multirelations therefore enable any range of inner choices bounded by any two inner sets in the multirelation.

Further, we obtain the following duality.

\begin{lemma}
  \label{lemma.updownprops}
  Let $R : X \rto \Pow Y$.
  Then $\icpl{(\up{R})} = \down{(\icpl{R})}$, $\icpl{(\down{R})} = \up{(\icpl{R})}$ and $\icpl{(\convex{R})} = \convex{(\icpl{R})}$.
\end{lemma}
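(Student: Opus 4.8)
The plan is to verify each of the three identities directly from the algebraic definitions, exploiting the fact that inner complementation is a quantale isomorphism from $\Mult_\iu(X,Y)$ to $\Mult_\ii(X,Y)$ (Proposition~\ref{proposition.iuiiquantale1}) together with the preservation properties recorded in the subsequent remark, in particular $\icpl{U} = U$ and the swap $\icpl{(R \iu S)} = \icpl{R} \ii \icpl{S}$, $\icpl{(R \ii S)} = \icpl{R} \iu \icpl{S}$.

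First I would treat up-closure: unfolding $\up{R} = R \iu U$ and applying the isomorphism gives $\icpl{(\up{R})} = \icpl{(R \iu U)} = \icpl{R} \ii \icpl{U} = \icpl{R} \ii U = \down{(\icpl{R})}$. The down-closure case is symmetric, using $\icpl{(R \ii U)} = \icpl{R} \iu \icpl{U} = \icpl{R} \iu U = \up{(\icpl{R})}$; alternatively it follows from the first identity by replacing $R$ with $\icpl{R}$ and using that $\icpl{}$ is involutive. For convex-closure, I would combine the two previous results with the fact (noted in the remark after Proposition~\ref{proposition.iuiiquantale1}) that $\icpl{}$ preserves arbitrary intersections, hence binary ones: $\icpl{(\convex{R})} = \icpl{(\up{R} \cap \down{R})} = \icpl{(\up{R})} \cap \icpl{(\down{R})} = \down{(\icpl{R})} \cap \up{(\icpl{R})} = \convex{(\icpl{R})}$.

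I do not expect any genuine obstacle here; the statement is essentially a bookkeeping consequence of the inner duality already established. The only point requiring mild care is ensuring that the preservation of \emph{binary} intersection by $\icpl{}$ is legitimately available — it is, since $\icpl{}$ acts componentwise as boolean complementation on each inner set and boolean complementation is an anti-isomorphism of the powerset lattice, so it sends unions to intersections and intersections to unions; this is exactly what the remark after Proposition~\ref{proposition.iuiiquantale1} asserts. One could alternatively avoid even that appeal by working purely at the level of elements $(a,A)$, noting $(a,A) \in \icpl{(\up{R})}$ iff $(a,-A) \in \up{R}$ iff $\exists (a,B) \in R$ with $B \subseteq -A$ iff $\exists (a,B) \in R$ with $A \subseteq -B$ iff $\exists (a,-B) \in \icpl{R}$ with $A \subseteq -B$ iff $(a,A) \in \down{(\icpl{R})}$, and similarly for the other two; but the algebraic derivation via Proposition~\ref{proposition.iuiiquantale1} is shorter and in keeping with the paper's stated preference for algebraic proofs.
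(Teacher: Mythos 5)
Your proof is correct and is essentially the intended argument: the paper states this lemma without proof, as an immediate consequence of the inner isomorphism of Proposition~\ref{proposition.iuiiquantale1} (with $\icpl{U}=U$ and preservation of intersections from the following remark), which is exactly the derivation you give. Both your algebraic route and your element-level check are sound.
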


\begin{remark}
  The relationship $(a,B) \in 1_X \iu U_{X,\Pow X}$ if and only if $a \in B$ confirms that ${\in} = \up{1}$ can be defined in the multirelational language.
  See Appendix~\ref{section.basis} for context.
\end{remark}

\subsection{Structure of inner-closed sets}

The inner-closed multirelations form quantales similar to those in Proposition~\ref{proposition.iuiiquantale1}, but part of the inner structure collapses: $\iu$ becomes $\cap$ when multirelations are up-closed~\cite{FurusawaStruth2016}; dually, therefore, $\ii$ becomes $\cap$ when they are down-closed.
First we note the following fact without proof.

\begin{lemma}
  \label{lemma.iuiiquantale}
  Up- and down-closure of multirelations preserve arbitrary unions:
  \begin{equation*}
    \up{\left( \bigcup R \right)} = \bigcup_{S \in R} \up{S} \qquad
    \text{ and } \qquad
    \down{\left( \bigcup R \right)} = \bigcup_{S \in R} \down{S}.
 \end{equation*}
\end{lemma}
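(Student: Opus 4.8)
The plan is to reduce both equations to the fact, recalled in Section~\ref{subsection.binary-relations}, that relational composition preserves arbitrary sups in each argument. From the preceding subsection we have the relation-algebraic descriptions $\up{R} = R\Omega$ and $\down{R} = R\Omega^\converse$, with $\Omega$ the subset relation. Both $\up{(-)}$ and $\down{(-)}$ are thus postcomposition by a fixed relation, hence preserve all unions, which gives
\[
  \up{\left( \bigcup R \right)} = \left( \bigcup R \right)\Omega = \bigcup_{S \in R} (S\Omega) = \bigcup_{S \in R} \up{S},
\]
and the identical computation with $\Omega^\converse$ in place of $\Omega$ yields the statement for $\down{(-)}$.

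If one prefers to avoid $\Omega$, the $\up{(-)}$ equation can be checked pointwise from the characterisation $\up{R} = \{ (a,A) \mid \exists (a,B) \in R.\ B \subseteq A \}$: a pair $(a,A)$ lies in $\up{(\bigcup R)}$ iff there is some $(a,B) \in \bigcup R$ with $B \subseteq A$, iff there are $S \in R$ and $(a,B) \in S$ with $B \subseteq A$, iff $(a,A) \in \up{S}$ for some $S \in R$. The $\down{(-)}$ case then follows symmetrically, or from inner duality: since $\icpl{(-)}$ is a bijection preserving arbitrary unions (Proposition~\ref{proposition.iuiiquantale1}) and $\icpl{(\up{R})} = \down{(\icpl{R})}$ (Lemma~\ref{lemma.updownprops}), one obtains the $\down{(-)}$ identity by applying $\icpl{(-)}$ to the already-proved $\up{(-)}$ identity and pushing it through $\bigcup$ and $\up{(-)}$.

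There is no real obstacle here; the one point worth noting is that it is cleanest to work with the description $\up{(-)} = (-)\Omega$ and invoke sup-preservation of composition once, rather than unfolding the definition of $\iu$. The analogue for $\convex{(-)}$ does not hold in general, because $\convex{R} = \up{R} \cap \down{R}$ involves an intersection, over which unions do not distribute; this is why the lemma is restricted to up- and down-closure.
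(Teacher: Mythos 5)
Your proof is correct. The paper in fact states this lemma without proof (``First we note the following fact without proof''), so there is nothing to compare against; your main argument --- writing $\up{R} = R\Omega$ and $\down{R} = R\Omega^\converse$ and invoking the sup-preservation of relational composition recalled in Section~\ref{subsection.binary-relations} --- is exactly the one-line justification the authors presumably had in mind, and your closing remark about why $\convex{(-)}$ is excluded matches the paper's own follow-up comment that these closures need not preserve intersections.
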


These operations need not preserve intersections, but arbitrary intersections of closed elements of any closure operator are of course closed.

Next we present a refinement of Proposition~\ref{proposition.iuiiquantale1}.

\begin{proposition}
  \label{proposition.iuiiquantale2}
  ~
  \begin{enumerate}
  \item $(\Mult_\downarrow(X,Y),\subseteq,\iu,\iuone)$ is a commutative subquantale of $\Mult_\iu(X,Y)$ in which $\ii = \cap$, $\down{\iione} = U$ and $\iuone = \down{\iuone}$.
  \item $(\Mult_\uparrow(X,Y),\subseteq,\ii,\iione)$ is a commutative subquantale of $\Mult_\ii(X,Y)$ in which $\iu = \cap$, $\up{\iuone} = U$ and $\iione = \up{\iione}$.
  \item The maps $\down{(-)} : \Mult(X,Y) \to \Mult_\downarrow(X,Y)$ and $\up{(-)} : \Mult(X,Y) \to \Mult_\uparrow(X,Y)$ are quantale homomorphisms, $\icpl{} : \Mult_\downarrow(X,Y) \to \Mult_\uparrow(X,Y)$ is a quantale isomorphism.
  \item $(\Mult_\updownarrow(X,Y),\bigcap)$ is an inf-lattice.
  \item The map $\convex{(-)} : \Mult(X,Y) \to \Mult_\updownarrow(X,Y)$ is an inf-lattice morphism and $\icpl{} : \Mult_\updownarrow(X,Y) \to \Mult_\updownarrow(X,Y)$ is an inf-lattice automorphism.
  \end{enumerate}
\end{proposition}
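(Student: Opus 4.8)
The plan is to prove the five items largely by combining Proposition~\ref{proposition.iuiiquantale1}, the duality of Lemma~\ref{lemma.updownprops}, and the sup-preservation of closures from Lemma~\ref{lemma.iuiiquantale}, so that most of the work reduces to two concrete computations: that $\iu$ restricted to down-closed multirelations coincides with $\cap$, and that $\down{(-)}$ is a quantale homomorphism. First I would establish (1). The subset $\Mult_\downarrow(X,Y)$ is closed under arbitrary unions by Lemma~\ref{lemma.iuiiquantale}, hence under the quantale sups of $\Mult_\iu(X,Y)$; it contains $\iuone$ since $\iuone = \{(a,\emptyset)\}$ and $\emptyset$ is $\subseteq$-below everything, so $\down{\iuone} = \iuone$. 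For closure under $\iu$ I would show the stronger identity $R \iu S = R \cap S$ for down-closed $R,S$: the inclusion $R \cap S \subseteq R \iu S$ is the general fact $R \subseteq R\iu R$ applied after noting $R\cap S \subseteq R,S$; conversely, given $(a, A\cup B)$ with $(a,A)\in R$, $(a,B)\in S$, down-closure of $R$ gives $(a,A\cup B)\in R$ (since $A \subseteq A\cup B$), and likewise $(a,A\cup B)\in S$, so $(a,A\cup B)\in R\cap S$. From $\iu = \cap$ on $\Mult_\downarrow$ it follows that $\iu$ preserves all sups there (intersection distributes over arbitrary union in a boolean algebra), so we indeed have a commutative subquantale; that $\ii = \cap$ on down-closed multirelations is the dual statement, obtainable either directly by the same argument using $A\cap B \subseteq A$ or by transporting along $\icpl{}$ and Lemma~\ref{lemma.updownprops}; and $\down{\iione} = \iione \ii U = \{(a,Y\cap C)\mid C\subseteq Y\} = \{(a,A)\mid A\subseteq Y\} = U$.

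Item (2) is then immediate by inner duality: $\icpl{}$ is a boolean-algebra isomorphism (Proposition~\ref{proposition.iuiiquantale1}) sending $\Mult_\downarrow(X,Y)$ onto $\Mult_\uparrow(X,Y)$ and $\iu$ to $\ii$ by Lemma~\ref{lemma.updownprops}, so all of (1) transports verbatim. For (3), that $\icpl{}$ is a quantale isomorphism between these subquantales is what we just used; that $\up{(-)}$ and $\down{(-)}$ are quantale homomorphisms requires: preservation of arbitrary sups (Lemma~\ref{lemma.iuiiquantale}); preservation of the unit, $\down{\iuone} = \iuone$ shown above (and dually $\up{\iuone}=\iuone$); and preservation of the product, i.e. $\down{(R\iu S)} = \down{R}\iu\down{S}$. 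The last equals $\down{R}\cap\down{S}$ by (1), so it suffices to prove $\down{(R\iu S)} = \down{R}\cap\down{S}$: the left-to-right inclusion holds because $\down{(-)}$ is monotone and $R\iu S\subseteq R\iu R$-style reasoning is not needed — rather $(a, C)$ with $C\subseteq A\cup B$, $(a,A)\in R$, $(a,B)\in S$ forces $C = (C\cap A)\cup(C\cap B)$ with $C\cap A\subseteq A$ and $C\cap B\subseteq B$, giving $(a,C\cap A)\in\down{R}$, $(a,C\cap B)\in\down{S}$, hence $(a,C)\in\down{R}\iu\down{S} = \down{R}\cap\down{S}$ — wait, more carefully $(a,C)\in\down{R}$ directly since $C\cap A\subseteq C$... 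I would simply note $(a,C)\in\down{R}$ because $C\subseteq A\cup B$ and one can take the witness $A\cup B\in$? No: instead observe $C\subseteq A\cup B$ does not give a witness in $R$; the clean route is $\down{(R\iu S)}\subseteq\down{(\down{R}\cap\down{S})} = \down{R}\cap\down{S}$ using $R\iu S\subseteq\down{R}\cap\down{S}$ (each $(a,A\cup B)$ has $A\subseteq A\cup B$ so lies in $\down{R}$, similarly $\down{S}$) together with idempotence of $\down{(-)}$ on closed sets; the reverse inclusion $\down{R}\cap\down{S}\subseteq\down{(R\iu S)}$ follows because $(a,A)\in\down{R}$, $(a,A)\in\down{S}$ yields witnesses $(a,B)\in R$, $(a,C)\in S$ with $A\subseteq B$, $A\subseteq C$, hence $A\subseteq B\cup C$ and $(a,B\cup C)\in R\iu S$, so $(a,A)\in\down{(R\iu S)}$.

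For (4) and (5): $\Mult_\updownarrow(X,Y)$, being the set of fixpoints of the closure operator $\convex{(-)}$, is closed under arbitrary intersections, which gives the inf-lattice (with top $U$, since $U$ is up- and down-closed); this is a general fact about closure operators and needs no computation. That $\convex{(-)}$ is an inf-lattice morphism means it preserves arbitrary intersections, which would follow from $\up{(-)}$ and $\down{(-)}$ each preserving intersections of the relevant elements — but these do not preserve intersections in general, so here I expect the subtlety to lie. The right statement is only that $\convex{(-)}$ preserves the inf operation of the lattice $\Mult_\updownarrow$, i.e. that $\convex{(\bigcap\mathcal R)} = \bigcap_{S\in\mathcal R}\convex{S}$ when each $\convex{S}$ is already convex — but $\bigcap\mathcal R$ need not be, so this reads $\convex{(\bigcap\mathcal R)} = \bigcap_{S\in\mathcal R}\convex{S}$ with convex $S$'s on the right; since $\bigcap_S\convex{S}$ is convex and contained in each $\convex{S}$ hence in $\convex{(\bigcap_S\convex{S})}=\bigcap_S\convex{S}$, one direction is trivial, and $\bigcap\mathcal R\subseteq\bigcap_S\convex{S}$ gives $\convex{(\bigcap\mathcal R)}\subseteq\bigcap_S\convex{S}$ by monotonicity. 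Finally, $\icpl{}$ maps $\Mult_\updownarrow$ to itself by Lemma~\ref{lemma.updownprops} and is involutive, hence an inf-lattice automorphism (it preserves arbitrary intersections by the Remark after Proposition~\ref{proposition.iuiiquantale1}). The main obstacle I anticipate is getting the product-preservation $\down{(R\iu S)} = \down{R}\iu\down{S}$ and its dual exactly right — in particular being careful that the homomorphism property uses $\iu = \cap$ on the target, not on the source — together with pinning down precisely what ``inf-lattice morphism'' asserts for $\convex{(-)}$, since unlike $\up{(-)}$ and $\down{(-)}$ it is not a quantale map and only the meet-semilattice structure is claimed preserved.
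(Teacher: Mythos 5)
You have the direction of the collapse backwards, and this sinks the computational core of the proposal. Down-closure means that $(a,B)\in R$ and $A\subseteq B$ imply $(a,A)\in R$, so from $(a,A)\in R$ and $A\subseteq A\cup B$ you may \emph{not} conclude $(a,A\cup B)\in R$ --- that is the up-closure property. Consequently $\iu$ does \emph{not} coincide with $\cap$ on $\Mult_\downarrow(X,Y)$: for $R=\down{\{(a,\{a\})\}}$ and $S=\down{\{(a,\{b\})\}}$ one has $(a,\{a,b\})\in R\iu S$ while $R\cap S=\{(a,\emptyset)\}$. The statement of the proposition itself says that it is $\ii$ that becomes $\cap$ on down-closed multirelations (dually, $\iu$ becomes $\cap$ on up-closed ones), and Example~\ref{example.not-idempotent2} of the paper records exactly the failure of your claim. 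The error propagates: your argument that $\iu$ preserves sups on $\Mult_\downarrow(X,Y)$ (distributivity of $\cap$ over unions) collapses, the inclusion $R\iu S\subseteq\down{R}\cap\down{S}$ is false for the same reason, and the identity $\down{(R\iu S)}=\down{R}\cap\down{S}$ on which you base the homomorphism property of $\down{(-)}$ is false --- with the same $R,S$, $\down{(R\iu S)}$ contains $(a,\{a,b\})$ but $\down{R}\cap\down{S}=\{(a,\emptyset)\}$; only the converse inclusion $\down{R}\cap\down{S}\subseteq\down{(R\iu S)}$ that you verified is correct. What is actually needed, and what the paper proves, is the pair of identities $\down{(R\iu S)}=\down{R}\iu\down{S}$ (so $\down{(-)}$ is a nucleus with equality, yielding both closure of $\Mult_\downarrow(X,Y)$ under $\iu$ and product preservation, via the standard quotient-quantale theorem for nuclei) and, separately, $\down{(R\ii S)}=\down{R}\ii\down{S}=\down{R}\cap\down{S}$ for the collapse of $\ii$.

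Your remaining items are closer to the mark: the transport of (1) to (2) along $\icpl{}$ using Lemma~\ref{lemma.updownprops}, the unit computations, and closure of $\Mult_\updownarrow(X,Y)$ under arbitrary intersections as a general closure-operator fact are all sound in outline. For (5), however, you only establish $\convex{\left(\bigcap\mathcal{R}\right)}\subseteq\bigcap_{R\in\mathcal{R}}\convex{R}$ and explicitly leave the converse open, so the morphism claim is not proved. Your suspicion about this point is well placed: the paper's own proof obtains the equality by asserting that $\up{(-)}$ and $\down{(-)}$ preserve arbitrary intersections, which is the delicate step (the paper itself remarks earlier that these operations need not preserve intersections), so this is not something you can wave through --- it needs either a direct argument or an honest restriction of the claim.
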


\begin{proof}
  For (1)--(3) note that the maps $\down{(-)}$ and $\up{(-)}$ are nuclei: closure operators satisfying $\down{R} \iu \down{S} \subseteq \down{(R \iu S)}$ and $\up{R} \ii \up{S} \subseteq \up{(R \ii S)}$, and in fact
  \begin{equation*}
    \down{R} \iu \down{S} = \down{(R \iu S)} \qquad
    \text{ and } \qquad
    \up{R} \ii \up{S} = \up{(R \ii S)}.
  \end{equation*}
  Hence $\Mult_\downarrow(X,Y)$ is a quantale with composition $\iu$ and $\down{(-)} : \Mult(X,Y) \to \Mult_\downarrow(X,Y)$ is a quantale morphism.
  Likewise $\Mult_\uparrow(X,Y)$ is a quantale with composition $\ii$ and $\up{(-)} : \Mult(X,Y) \to \Mult_\uparrow(X,Y)$ is a quantale morphism~\cite[Theorem 3.3.1]{Rosenthal1990}.
  Moreover, $\down{\iuone} = \iuone$ and $\up{\iione} = \iione$ show unit preservation.
  The map $\icpl{}$ is a quantale isomorphism by Proposition~\ref{proposition.iuiiquantale1} and Lemma~\ref{lemma.updownprops}.
  Further,
  \begin{equation*}
    \up{(R \iu S)} = \up{R} \iu \up{S} = \up{R} \cap \up{S} \qquad
    \text{ and } \qquad
    \down{(R \ii S)} = \down{R} \ii \down{S} = \down{R} \cap \down{S}.
  \end{equation*}
  For $\iu$, this fact is known~\cite{FurusawaStruth2016}.
  That for $\ii$ then follows from inner duality.
  Idempotency of $\iu$ for up-closed multirelations and of $\ii$ for down-closed multirelations and coincidence with $\cap$ are trivial consequences of these facts.

  For (4) and (5), let $S \subseteq \Mult(X,Y)$.
  Then
  \begin{equation*}
      \convex{\left( \bigcap S \right)}
    = \up{\left( \bigcap S \right)} \cap \down{\left( \bigcap S \right)}
    = \left( \bigcap_{R \in S} \up{R} \right) \cap \left( \bigcap_{R \in S} \down{R} \right)
    = \bigcap_{R \in S} \up{R} \cap \down{R}
    = \bigcap_{R \in S} \convex{R}
  \end{equation*}
  using (3).
  This shows that $\convex{(-)}$ preserves arbitrary intersections.
  Moreover if $S \subseteq \Mult_\updownarrow(X,Y)$, then
  \begin{equation*}
    \convex{\left( \bigcap S \right)} = \bigcap_{R \in S} \convex{R} = \bigcap S.
  \end{equation*}
  Hence $\Mult_\updownarrow(X,Y)$ is closed under arbitrary intersections and forms an inf-lattice.
  The automorphism claim about $\icpl{}$ follows from Lemma~\ref{lemma.updownprops} since $\icpl{}$ preserves arbitrary intersections.
\end{proof}

Parts (4) and (5) do not extend to a quantale structure for convex-closed multirelations as there is no operation corresponding to $\iu$ and $\ii$.

The complete sublattices of $\Mult(X,Y)$ need not be boolean: $\Mult_\uparrow(X,Y)$, $\Mult_\downarrow(X,Y)$ and $\Mult_\updownarrow(X,Y)$ are not closed under complementation.

\begin{example}
  \label{example.not-idempotent2}
  The inner intersection of down-closed multirelations, as set-intersection, is idempotent.
  Yet the inner union of down-closed multirelations need not be idempotent: for the multirelation $R$ in Example~\ref{example.not-idempotent}, $\down{R} = \{ (a,\emptyset), (a,\{a\}), (a,\{b\}) \} \subset \down{R} \cup \{ (a,\{a,b\}) \} = \down{R} \iu \down{R}$.
  Dually, while the inner union of up-closed multirelations is idempotent, the inner intersection of up-closed multirelations need not be idempotent: assuming that $R$ is a multirelation on $\{a,b\}$, $\up{R}= R \iu R \subset (R \iu R) \cup \{ (a,\emptyset) \} = \up{R} \ii \up{R}$.
  This shows that set inclusion is still not the natural order on $\Mult_\uparrow(X,Y)$ and $\Mult_\downarrow(X,Y)$.
  Note, however, that $\up{R} \cap \up{S} \subseteq \up{R} \ii \up{S}$ and $\down{R} \cap \down{S} \subseteq \down{R} \iu \down{S}$.
\end{example}

Finally, $\convex{R} \cap \convex{S} = \up{(R \iu S)} \cap \down{(R \ii S)} = \convex{(R \iu S)} \cap \convex{(R \ii S)}$, $\convex{\iuone} = \iuone$, $\convex{\iione} = \iione$, $\convex{1} = 1$, $\convex{U} = U$, $\convex{\emptyset} = \emptyset$ and every univalent multirelation is convex-closed.

\subsection{Inner closures and Peleg composition}

The inner operations, in particular up-closure, have so far been studied primarily in combination with Parikh's composition of multirelations in game logics~\cite{Parikh1983,PaulyParikh2003}.
Note that multirelations under Peleg composition and the outer operations do not form quantales -- or quantaloids, their categorifications -- because Peleg composition is not associative and does not preserve all sups in its second argument.
For similar reasons, and the failure of idempotency of inner union and intersection, they do not form quantales on the inner structure.
See~\cite{FurusawaStruth2015a,FurusawaStruth2016} for more details on these structures.
Here, instead, we relate the inner operations with Peleg composition, which leads to an alternative characterisation of down-closure for multirelations.

\begin{lemma}
  \label{lemma.one-down-lift}
  $(\down{1})_\seq = (1 \cup \iuone)_\seq = \Omega^\converse$, and thus $\down{R} = R \seq \down{1}$ for all $R : X \rto \Pow Y$.
\end{lemma}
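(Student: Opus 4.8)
The plan is to establish the three assertions in sequence, with the middle one — computing the Peleg lifting of $\down{1}$ — carrying essentially all of the content; the outer two are bookkeeping.

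First I would note that the equality $\down{1} = 1 \cup \iuone$ already holds before applying $(-)_\seq$. Unfolding the set description $\down{R} = \{ (a,A) \mid \exists (a,B) \in R .\ A \subseteq B \}$ at $R = 1$ (at the relevant type $Y$), the only $B$ with $(a,B) \in 1$ is $B = \{a\}$, so $\down{1} = \{ (a,A) \mid A \subseteq \{a\} \} = \{ (a,\emptyset) \mid a \in Y \} \cup \{ (a,\{a\}) \mid a \in Y \} = \iuone \cup 1$. This makes the first asserted equality immediate, and reduces everything to computing $(\down{1})_\seq$.

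Next I would compute $(\down{1})_\seq$ directly from the definition of the Peleg lifting. A pair $(A,B)$ lies in $(\down{1})_\seq$ iff there is $f : Y \to \Pow Y$ with $f|_A \subseteq \down{1}$ and $B = \bigcup f(A)$; the first condition says exactly that $f(a) \in \{ \emptyset, \{a\} \}$ for every $a \in A$. For any such $f$ we get $B = \bigcup_{a \in A} f(a) \subseteq A$; conversely, given any $B \subseteq A$, the choice $f(a) = \{a\}$ for $a \in B$ and $f(a) = \emptyset$ otherwise witnesses $(A,B) \in (\down{1})_\seq$. Hence $(\down{1})_\seq = \{ (A,B) \mid B \subseteq A \} = \Omega^\converse$, using the description of $\Omega$ from Section~\ref{subsection.binary-relations}. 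Finally, for $\down{R} = R \seq \down{1}$ I would just chain $R \seq \down{1} = R (\down{1})_\seq = R \Omega^\converse = \down{R}$, where the first step is the definition $R \seq S = R S_\seq$, the second is the computation just made, and the third is the alternative characterisation $\down{R} = R \Omega^\converse$ already recorded in Section~\ref{section.up-down}.

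I do not expect a genuine obstacle. The only points needing a little care are that the witnessing function $f$ in the Peleg lifting is total on $Y$ while only its restriction to $A$ is constrained — so one must observe that the pointwise condition on $A$ is genuinely equivalent to the existence of a suitable global $f$, which is clear since $f$ may be extended by $\emptyset$ outside $A$ — and the degenerate case $A = \emptyset$, where $B = \emptyset$ and $B \subseteq A$ holds vacuously. One could alternatively derive $(\down{1})_\seq$ from the decomposition $R_\seq = \dom(R)_\seq \bigcup_{S \subseteq_d R} S_\kleisli$, but the direct argument above seems shortest.
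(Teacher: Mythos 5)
Your proof is correct and follows essentially the same route as the paper's: identify $\down{1} = 1 \cup \iuone$, unfold the Peleg lifting to see that the witnessing $f$ must send each $a \in A$ to $\emptyset$ or $\{a\}$, conclude $(\down{1})_\seq = \{(A,B) \mid B \subseteq A\} = \Omega^\converse$, and then chain $R \seq \down{1} = R (\down{1})_\seq = R \Omega^\converse = \down{R}$. You merely spell out the two inclusions and the edge cases that the paper leaves implicit.
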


\begin{proof}
  First, $(\down{1})_\seq = (1 \cup \iuone)_\seq$ is trivial and
  \begin{equation*}
    (1 \cup \iuone)_\seq = \{ (A,B) \mid \exists f .\ (\forall a \in A .\ f(a) \in \{ \emptyset, \{a\} \}) \wedge B = \bigcup f(A) \} = \{ (A,B) \mid B \subseteq A \} = \Omega^\converse.
  \end{equation*}
  Thus $\down{R} = R \Omega^\converse = R (\down{1})_\seq = R \seq \down{1}$.
\end{proof}

It follows that $\down{1} \seq \down{1} = \down{1}$ and $\Mult_\downarrow(X,Y) = \{ R \mid X \rto \Pow Y \mid R \seq \down{1} = R \}$.

\begin{lemma}
  \label{lemma.eps-comp-up}
  Let $R : X \rto \Pow Y$.
  Then
  \begin{enumerate}
  \item ${\in} \seq \up{R} = \up{R}$,
  \item $\up{R} = R \seq {\in}$ if $R$ is inner total.
  \end{enumerate}
\end{lemma}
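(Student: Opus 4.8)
The plan is to unfold Peleg composition directly, keeping in mind the types: in~(1) the relation ${\in}$ must be ${\in_X} = \up{1_X} : X \rto \Pow X$ so that it precomposes $\up{R}$, whereas in~(2) it is ${\in_Y} : Y \rto \Pow Y$. For~(1) I would establish
\[
  \up{R} \subseteq {\in_X} \seq \up{R} \subseteq \up{(\up{R})} = \up{R},
\]
the last equality being idempotency of the closure operator $\up{(-)}$. Both inclusions in fact hold with an arbitrary multirelation $S : X \rto \Pow Y$ in place of $\up{R}$. For $S \subseteq {\in_X} \seq S$, given $(a,C) \in S$ take the singleton index set $\{a\}$, which contains $a$, together with any $f : X \to \Pow Y$ with $f(a) = C$; this witnesses $(a,C) \in {\in_X} \seq S$. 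For ${\in_X} \seq S \subseteq \up{S}$, if $(a,C) = (a, \bigcup f(A)) \in {\in_X} \seq S$ with $a \in A$ and $f|_A \subseteq S$, then $(a, f(a)) \in S$ with $f(a) \subseteq C$, hence $(a,C) \in \up{S}$. Instantiating $S = \up{R}$ collapses the chain to equalities.

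For~(2) I would prove the two inclusions separately. The inclusion $R \seq {\in_Y} \subseteq \up{R}$ holds unconditionally: writing $R \seq {\in_Y} = R ({\in_Y})_\seq$ and $\up{R} = R \Omega_Y$, it is enough to note $({\in_Y})_\seq \subseteq \Omega_Y$, which is immediate because $f|_A \subseteq {\in_Y}$ forces $a \in f(a)$ for every $a \in A$, so $A \subseteq \bigcup f(A)$. For the reverse inclusion, let $(a,C) \in \up{R}$, so there is $(a,B) \in R$ with $B \subseteq C$; inner totality of $R$ gives $B \neq \emptyset$, and then the function $f : Y \to \Pow Y$ that is constantly $C$ on $B$ satisfies $f|_B \subseteq {\in_Y}$ since $b \in B \subseteq C$ for $b \in B$, and $\bigcup f(B) = C$ since $B \neq \emptyset$; hence $(a,C) \in R \seq {\in_Y}$.

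The step requiring the most care is pinpointing where inner totality enters~(2). Without the hypothesis $R \subseteq -\iuone$, a pair $(a,\emptyset) \in R$ contributes only $(a,\emptyset)$ to $R \seq {\in_Y}$, as $\bigcup f(\emptyset) = \emptyset$, yet contributes every $(a,A)$ to $\up{R}$; so inner totality is exactly what makes the empty index set irrelevant and lets the constant witness $f$ be chosen above. A routine bookkeeping point throughout is that Peleg composition quantifies over functions defined on the entire codomain even though only a restriction is constrained, so the witnesses are extended off $B$ (resp.\ off $\{a\}$) arbitrarily.
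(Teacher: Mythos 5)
Your proof is correct. For part~(1) it takes a genuinely different and more elementary route than the paper: the paper obtains $\up{R} \subseteq {\in} \seq \up{R}$ the same way you do (via $1 \seq \up{R}$ and monotonicity), but proves the converse inclusion by a relation-algebraic calculation that unfolds the Peleg lifting as $\dom(R)_\seq \bigcup_{Q \subseteq_d R} Q_\kleisli$, reduces to univalent $Q$, and then manipulates residuals and symmetric quotients. Your pointwise observation that ${\in} \seq S \subseteq \up{S}$ holds for \emph{every} $S$ (the witness $f$ satisfies $(a,f(a)) \in S$ and $f(a) \subseteq \bigcup f(A)$ whenever $a \in A$), combined with idempotency of $\up{(-)}$, is shorter and yields a slightly more general sandwich $S \subseteq {\in} \seq S \subseteq \up{S}$; what it gives up is the algebraic style the paper deliberately cultivates with a view towards axiomatisation. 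For part~(2) your argument is essentially the paper's in different clothing: the paper computes $({\in}_\seq)_{A,B} \Leftrightarrow A \subseteq B$ for $A \neq \emptyset$ (your constant witness $f \equiv C$ on $B$ is exactly their ``$f(a)=B$ gives $\Leftarrow$'' direction) and then pushes $R = R - \iuone$ through an equational chain, whereas you split into the unconditional inclusion $({\in})_\seq \subseteq \Omega$ and the converse using inner totality. Your diagnosis of precisely where inner totality is needed --- a pair $(a,\emptyset)$ contributes only $(a,\emptyset)$ to $R \seq {\in}$ but everything to $\up{R}$ --- matches the paper's restriction to $A \neq \emptyset$ and is the right way to see that the hypothesis cannot be dropped.
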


\begin{proof}
  For (1), clearly $\up{R} = 1 \seq \up{R} \subseteq {\in} \seq \up{R}$.
  We obtain the converse inclusion by
  \begin{equation*}
    {\in} \seq \up{R} = {\in} \dom(R)_\seq \bigcup_{Q \subseteq_d R} Q_\kleisli = \bigcup_{Q \subseteq_d R} {\in} \dom(Q)_\seq Q_\kleisli \subseteq \up{R}
  \end{equation*}
  if we can show ${\in} \dom(Q)_\seq Q_\kleisli \subseteq \up{Q}$ for univalent $Q$.
  By
  \begin{equation*}
    {\in} \dom(Q)_\seq = {\in} ({\in} \backslash \dom(Q) {\in} \cap \Id) \subseteq {\in} ({\in} \backslash \dom(Q) {\in}) \subseteq \dom(Q) {\in}
  \end{equation*}
  it remains to show $\dom(Q) {\in} Q_\kleisli \subseteq \up{Q}$.
  Since $Q_\kleisli$ is a function and $Q$ is univalent, this is equivalent to
  \begin{align*}
      \dom(Q) {\in}
    & \subseteq Q \Omega (Q_\kleisli)^\converse \\
    & = Q ({\in} \backslash {\in} (Q_\kleisli)^\converse) \\
    & = Q ({\in} \backslash {\in} (\syq{\in}{{\in} Q^\converse {\in}})) \\
    & = Q ({\in} \backslash {\in} Q^\converse {\in}) \\
    & = Q U \cap ({\in} Q^\converse \backslash {\in} Q^\converse {\in})
  \end{align*}
  which follows from $\dom(Q) {\in} \subseteq \dom(Q) U = Q U$ and ${\in} Q^\converse \dom(Q) {\in} \subseteq {\in} Q^\converse {\in}$.

  For (2), suppose $A \neq \emptyset$.
  Then
  \begin{equation*}
    ({\in}_\seq)_{A,B} \Leftrightarrow (\exists f .\ (\forall a \in A .\ a \in f(a)) \wedge B = \bigcup f(A)) \Leftrightarrow (A \subseteq B) \Leftrightarrow \Omega_{A,B}
  \end{equation*}
  where $A \subseteq f(A)$ gives $\Rightarrow$ and $f(a) = B$ gives $\Leftarrow$ of the second equivalence.
  Hence ${\in}_\seq - \iuone^\converse = \Omega - \iuone^\converse$.
  Thus
  \begin{equation*}
    \up{R} = R \Omega = (R - \iuone) \Omega = R (\Omega - \iuone^\converse) = R ({\in}_\seq - \iuone^\converse) = (R - \iuone) {\in}_\seq = R {\in}_\seq = R \seq {\in}
  \end{equation*}
  using that $R$ is inner total.
\end{proof}

\begin{lemma}
  \label{lemma.inner-deterministic-upclosed}
  Let $R$ and $S$ be composable multirelations.
  Then
  \begin{enumerate}
  \item $\down{(R \seq S)} = R \seq \down{S}$ and hence $\down{(\down{R} \seq \down{S})} = \down{R} \seq \down{S}$,
  \item $\up{(R \seq S)} = R \seq \up{S} = \up{R} \seq \up{S}$ if $R$ is inner deterministic.
  \end{enumerate}
\end{lemma}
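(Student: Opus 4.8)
The plan is to derive both statements algebraically from the conditional associativity of Peleg composition recalled in Section~\ref{SS:multirelations} --- namely that $(P \seq Q) \seq T = P \seq (Q \seq T)$ whenever the third factor $T$ is univalent or union-closed --- together with the relational presentations $\up{R} = R\Omega$, $\down{R} = R\Omega^\converse$ and Lemmas~\ref{lemma.one-down-lift}, \ref{lemma.inner-det-props} and~\ref{lemma.eps-comp-up}. No element-wise argument should be needed.

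For part~(1), I would first note that $\down{1}$ is union-closed: above any point $a$ its inner sets are exactly $\emptyset$ and $\{a\}$, and these are closed under arbitrary unions. Applying Lemma~\ref{lemma.one-down-lift} twice and associativity with the union-closed third factor $\down{1}$,
\begin{equation*}
  \down{(R \seq S)} = (R \seq S) \seq \down{1} = R \seq (S \seq \down{1}) = R \seq \down{S}.
\end{equation*}
The second identity of part~(1) then follows by instantiating $R := \down{R}$, $S := \down{S}$ in this identity and using idempotency of down-closure, $\down{(\down{S})} = \down{S}$.

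For part~(2), I would prove the two equalities of the chain $\up{(R \seq S)} = R \seq \up{S} = \up{R} \seq \up{S}$ separately. For the first, inner determinism of $R$ gives $R \seq T = R 1^\converse T$ for every composable $T$ by Lemma~\ref{lemma.inner-det-props}(1); combining this (with $T = S$ and with $T = \up{S}$) and $\up{T} = T\Omega$ yields
\begin{equation*}
  \up{(R \seq S)} = (R \seq S)\,\Omega = R 1^\converse S\,\Omega = R 1^\converse \up{S} = R \seq \up{S}.
\end{equation*}
For the second, inner deterministic multirelations are inner total, so $\up{R} = R \seq {\in}$ by Lemma~\ref{lemma.eps-comp-up}(2); since $\up{S}$ is up-closed it is union-closed (any set in $\up{S}$ is contained in the union of finitely or infinitely many such sets), so associativity applies with $\up{S}$ as third factor, and Lemma~\ref{lemma.eps-comp-up}(1) gives
\begin{equation*}
  \up{R} \seq \up{S} = (R \seq {\in}) \seq \up{S} = R \seq ({\in} \seq \up{S}) = R \seq \up{S}.
\end{equation*}

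The main obstacle is not the computation but getting the side conditions exactly right: one must verify that $\down{1}$ and every up-closed multirelation is union-closed so that the needed instances of associativity are licensed, and one must track that inner determinism is used only for the first equality of~(2), while its weaker consequence, inner totality, suffices for the second. That inner totality is genuinely required is easy to see --- if $R$ has an empty inner choice at some point, so does $R \seq S$, making $\up{(R \seq S)}$ the full relation there whereas $R \seq \up{S}$ need not be --- so one cannot hope to weaken the hypothesis of~(2).
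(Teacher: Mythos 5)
Your proof is correct and follows essentially the same route as the paper: part (1) via $\down{S} = S \seq \down{1}$ and associativity with the union-closed third factor $\down{1}$, and part (2) via $R \seq T = R 1^\converse T$ for inner deterministic $R$ together with $\up{R} = R \seq {\in}$, ${\in} \seq \up{S} = \up{S}$ and associativity with the union-closed (because up-closed) third factor $\up{S}$. The only quibble is your parenthetical justification that up-closed implies union-closed, which should read that any nonempty inner union of sets from $\up{S}$ contains one of them and is therefore again in $\up{S}$ by up-closure.
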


\begin{proof}
  For (1), $\down{(R \seq S)} = (R \seq S) \seq \down{1} = R \seq (S \seq \down{1}) = R \seq \down{S}$ as $\down{1}$ is union-closed.
  For (2), $\up{(R \seq S)} = R 1^\converse S \Omega = R \seq \up{S} = R \seq ({\in} \seq \up{S}) = (R \seq {\in}) \seq \up{S} = \up{R} \seq \up{S}$ by Lemmas~\ref{lemma.inner-det-props} and~\ref{lemma.eps-comp-up}.
\end{proof}

The Peleg composition of down-closed multirelations is therefore down-closed.

\begin{example}
  Peleg compositions of up-closed multirelations need not be up-closed:
  \begin{equation*}
    \up{\iuone} \seq \up{\iione} = U \seq \iione = \iuone \cup \iione \subset U = \up{\iuone} = \up{(\iuone \seq \iione)}.
  \end{equation*}
  Note that $\iuone$ and $\iione$ are both deterministic.
\end{example}

However, the up-closure of the Peleg composition of up-closed multirelations equals their Parikh composition~\cite{FurusawaStruth2016} (and the co-composition of up-closed multirelations is up-closed, see Section~\ref{section.co-comp}).
As up-closed multirelations are union-closed, their Peleg composition is associative~\cite{FurusawaKawaharaStruthTsumagari2017}.

\begin{example}
  \label{example.up-dist-comp-det}
  The property $\up{(R \seq S)} = \up{R} \seq \up{S}$ from Lemma~\ref{lemma.inner-deterministic-upclosed} does not translate to down-closure: $\down{(1 \seq \emptyset)} = \down{\emptyset} = \emptyset \subset \iuone = \iuone \seq \emptyset = (1 \cup \iuone) \seq \emptyset = \down{1} \seq \down{\emptyset}$.
  Note that $1$ and $\emptyset$ are inner deterministic.
\end{example}

Finally, $\down{R} \seq S = R \seq (\iuone \cup 1) \seq S = R \seq ((\iuone \cup 1) S_\seq) = R \seq (\iuone \cup S)$.
The first step uses a special associativity property in the presence of $\iuone \cup 1$ (proved using Isabelle).


\section{Inner Preorders}
\label{section.inner-preorders}

Example~\ref{example.not-idempotent} shows that $\subseteq$ is not the natural order for $\iu$ and $\ii$.
Proposition~\ref{proposition.iuiiquantale2} shows that restrictions to up- or down-closed relations collapse part of the inner structure.
It is standard to define preorders, equivalences and partial orders based on the inclusion of closed sets.
Here, these preorders compare the inner nondeterminism of multirelations in different ways, while set inclusion obviously compares their outer nondeterminism.
Apart from the obvious interest in such comparisons, this raises the question whether these orders are natural for inner union and inner intersection.
The general answer is negative.

\subsection{Definition of inner preorders}

For $R, S : X \rto \Pow Y$, we define the \emph{Smyth preorder} $\subs$~\cite{Rewitzky2003}, its dual \emph{Hoare preorder} $\subh$ and the \emph{Egli-Milner preorder} $\subem$ as
\begin{equation*}
  R \subs S \Leftrightarrow S \subseteq \up{R}, \qquad
  R \subh S \Leftrightarrow R \subseteq \down{S}, \qquad
  R \subem S \Leftrightarrow R \subh S \wedge R \subs S.
\end{equation*}
Equivalently, $R \subs S \Leftrightarrow \up{S} \subseteq \up{R}$ and dually $R \subh S \Leftrightarrow \down{R} \subseteq \down{S}$.
However,
\begin{equation*}
  \convex{R} \subseteq \convex{S} \Leftrightarrow R \subseteq \convex{S} \Leftrightarrow R \subh S \subs R.
\end{equation*}
Expanding definitions,
\begin{align*}
  R \subs S & \Leftrightarrow (\forall a, C .\ S_{a,C} \Rightarrow \exists B .\ R_{a,B} \wedge B \subseteq C),\\
  R \subh S & \Leftrightarrow (\forall a, B .\ R_{a,B} \Rightarrow \exists C .\ S_{a,C} \wedge B \subseteq C).
\end{align*}

Intuitively, therefore, $R\subs S$ if for every outer choice of a set from a given element with $S$ there is a less nondeterministic outer choice from that element with $R$.
Moreover, $R\subh S$ if for every outer choice of a set from a given element with $R$ there is a more nondeterministic outer choice from that element with $R$.

The following fact is standard.

\begin{lemma}
  \label{lemma.order-emb}
  The map $\down{(-)}$ order-embeds $\subh$ into $\subseteq$ and $\up{(-)}$ order-embeds $\subs$ into $\supseteq$.
\end{lemma}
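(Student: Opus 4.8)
The statement has two symmetric halves, and by inner duality (Lemma~\ref{lemma.updownprops}, Proposition~\ref{proposition.iuiiquantale1}) it suffices to establish one of them carefully and transfer the other. I would focus on the claim that $\down{(-)}$ order-embeds $\subh$ into $\subseteq$; the claim for $\up{(-)}$ and $\supseteq$ follows by replacing $R$ with $\icpl{R}$, using $\icpl{(\down{R})} = \up{(\icpl{R})}$ and the fact that $\icpl{}$ is order-preserving on $\Mult(X,Y)$ (it is a boolean isomorphism on each homset). To show $\down{(-)}$ is an order-embedding I must show two things: that it is monotone, i.e. $R \subh S \Rightarrow \down{R} \subseteq \down{S}$, and that it reflects the order, i.e. $\down{R} \subseteq \down{S} \Rightarrow R \subh S$.

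Both directions are essentially already recorded in the excerpt and need only be assembled. For monotonicity I would invoke the equivalent characterisation stated just after the definition of the preorders: $R \subh S \Leftrightarrow \down{R} \subseteq \down{S}$. That line gives the embedding immediately once we observe that $\down{(-)}$ is a closure operator (stated in Section~4.1), so from $R \subh S$, which by definition means $R \subseteq \down{S}$, applying $\down{(-)}$ and using monotonicity and idempotence of closure yields $\down{R} \subseteq \down{\down{S}} = \down{S}$. Conversely, if $\down{R} \subseteq \down{S}$, then since $R \subseteq \down{R}$ (extensivity of the closure) we get $R \subseteq \down{S}$, which is exactly $R \subh S$. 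So the order-embedding for $\subh$ is a direct consequence of the extensivity, monotonicity and idempotence of $\down{(-)}$ together with the definition of $\subh$.

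For the $\up{(-)}$ half I have the analogous chain with $\up{(-)}$: $R \subs S \Leftrightarrow \up{S} \subseteq \up{R}$, so $\up{(-)}$ sends $\subs$ to $\supseteq$ as an order-embedding by the same extensivity/idempotence argument applied to the closure $\up{(-)}$, or alternatively by dualising the first half as noted above. There is essentially no obstacle here: the only thing to be careful about is orienting the inclusions correctly (the Smyth preorder is \emph{anti}-monotone in the up-closure, hence lands in $\supseteq$ rather than $\subseteq$), and making sure both the ``preserves'' and ``reflects'' directions are stated, since an order-embedding requires both. I would present the argument in one short paragraph covering $\subh$ and a sentence invoking duality for $\subs$.
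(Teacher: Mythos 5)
Your argument is correct and is exactly the standard closure-operator argument the paper alludes to when it states this fact without proof: extensivity, monotonicity and idempotence of $\down{(-)}$ give $R \subseteq \down{S} \Leftrightarrow \down{R} \subseteq \down{S}$, and likewise for $\up{(-)}$, with the duality transfer via $\icpl{(\down{R})} = \up{(\icpl{R})}$ as a valid alternative for the second half. Nothing is missing.
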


Moreover, $R \subh S \Leftrightarrow \down{(R \ii S)} = \down{R} \cap \down{S} = \down{R}$ and $R \subs S \Leftrightarrow \up{(R \iu S)} = \up{R} \cap \up{S} = \up{S}$.

\begin{example}
  While $R = R \ii S$ thus implies $R \subh S$ and $S = R \iu S$ implies $R \subs S$, the converse implications, which would be typical for natural orders, do not hold: for $X = \{a\}$, $R = \{ (a,\emptyset) \}$, $S = \{ (a,\{a\}) \}$ and $T = R \cup S$ satisfy $S \subh T$ and $T \subs R$, but $S \ii T = T \neq S$ and $T \iu R = T \neq R$.
\end{example}

We associate equivalences $\eqh$, $\eqs$ and $\eqem$ with $\subh$, $\subs$ and $\subem$ in the standard way by intersecting the preorders with their converses.
Thus
\begin{align*}
  R \eqh S \Leftrightarrow \down{R} = \down{S}, \qquad
  R \eqs S \Leftrightarrow \up{R} = \up{S}, \qquad
  R \eqem S \Leftrightarrow R \subem S \wedge S \subem R.
\end{align*}
It follows that $R \eqem S \Leftrightarrow \convex{R} = \convex{S}$ and therefore $R \eqem \convex{R}$.

\subsection{Algebras of preordered multirelations}

The following results describe the structure of the preorders and the resulting quotient quantales.

\begin{proposition}
  \label{proposition.dmonpreordered}
  ~
  \begin{enumerate}
  \item $(\Mult(X,Y),\subh,\iu,\iuone,\ii,\iione)$ is a preordered commutative double monoid with least element $\emptyset$ and greatest element $U$.
  \item $(\Mult(X,Y),\subs,\iu,\iuone,\ii,\iione)$ is a preordered commutative double monoid with least element $U$ and greatest element $\emptyset$.
  \item $(\Mult(X,Y),\subem,\iu,\iuone,\ii,\iione)$ is a preordered commutative double monoid.
  \item $\icpl{}$ is an order-reversing preordered double monoid isomorphism:
        \begin{equation*}
          R \subh S \Leftrightarrow \icpl{S} \subs \icpl{R}, \qquad
          R \subs S \Leftrightarrow \icpl{S} \subh \icpl{R}, \qquad
          R \subem S \Leftrightarrow \icpl{S} \subem \icpl{R}.
        \end{equation*}
  \item The preorders $\subh$, $\subs$ and $\subem$ are also precongruences with respect to $\cup$, $\up{}$, $\down{}$ and $\convex{}$.
  \item Peleg composition preserves $\subh$, $\subs$ and $\subem$ in its second argument.
  \end{enumerate}
\end{proposition}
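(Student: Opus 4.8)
The plan is to prove the six clauses of Proposition~\ref{proposition.dmonpreordered} more or less in order, leaning heavily on the characterisations of $\subh$, $\subs$ and $\subem$ in terms of the closure operators $\down{(-)}$ and $\up{(-)}$, and on the fact (Lemma~\ref{lemma.iuiiquantale} and Proposition~\ref{proposition.iuiiquantale2}) that these closures distribute over $\iu$, $\ii$ and $\cup$. For clause~(1), that $\subh$ is a preorder is immediate since $R \subh S \Leftrightarrow \down{R} \subseteq \down{S}$ transports reflexivity and transitivity of $\subseteq$ through the monotone map $\down{(-)}$. That it is a precongruence for $\iu$ reduces, via $\down{(R \iu S)} = \down{R} \iu \down{S}$ (Proposition~\ref{proposition.iuiiquantale2}(3)) together with monotonicity of $\iu$ in both arguments, to the observation that $\down{R} \subseteq \down{R'}$ and $\down{S}\subseteq\down{S'}$ give $\down{R}\iu\down{S}\subseteq\down{R'}\iu\down{S'}$; similarly for $\ii$ using $\down{(R \ii S)} = \down{R}\cap\down{S}$. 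The monoid laws for $\iu$ (associativity, commutativity, unit $\iuone$) and for $\ii$ (unit $\iione$) hold up to $\eqh$ from Proposition~\ref{proposition.iuiiquantale1}, but in fact they hold on the nose as equalities since $(\Mult(X,Y),\iu,\iuone)$ and $(\Mult(X,Y),\ii,\iione)$ are genuine monoids; so the "double monoid" structure is inherited and only the precongruence property is new. The least/greatest elements: $\emptyset \subh R$ because $\down{\emptyset} = \emptyset$, and $R \subh U$ because $\down{U} = U$ (indeed $U$ is down-closed); for $\subs$ in clause~(2) the roles swap, $\up{U} = U$ gives $U \subs R$ and $\up{\emptyset} = \emptyset$ gives $R \subs \emptyset$. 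The rest of clause~(2) is the exact dual of~(1), and clause~(3) follows by intersecting: $\subem$ is a preorder and a precongruence because it is the conjunction of two such relations.

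Clause~(4) is where the inner duality does the work: from Proposition~\ref{proposition.iuiiquantale1}, $\icpl{}$ is an involutive bijection exchanging $\iu\leftrightarrow\ii$ and $\iuone\leftrightarrow\iione$, and from Lemma~\ref{lemma.updownprops} it exchanges $\up{}\leftrightarrow\down{}$. Hence $R \subh S \Leftrightarrow \down{R}\subseteq\down{S} \Leftrightarrow \icpl{(\down{R})}\supseteq\icpl{(\down{S})} \Leftrightarrow \up{(\icpl{R})}\supseteq\up{(\icpl{S})} \Leftrightarrow \icpl{S}\subs\icpl{R}$, using injectivity of $\icpl{}$ for the second step; symmetrically for $\subs$, and $\subem$ follows by conjunction. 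Combined with the monoid-isomorphism properties of $\icpl{}$ already recorded, this gives the order-reversing double-monoid isomorphism.

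For clause~(5), precongruence with respect to $\cup$ follows from $\down{(R\cup S)} = \down{R}\cup\down{S}$ and $\up{(R\cup S)} = \up{R}\cup\up{S}$ (Lemma~\ref{lemma.iuiiquantale}) together with monotonicity of $\cup$; precongruence with respect to $\up{}$, $\down{}$ and $\convex{}$ follows because these are monotone idempotent operators and $R \subh S \Leftrightarrow \down R \subseteq \down S$ already factors through $\down{}$ (so e.g. $R \subh S$ gives $\down R \subh \down S$ since $\down{\down R} = \down R \subseteq \down S = \down{\down S}$, and the $\up{}$ and $\convex{}$ cases use $R \eqs \up R$, $R\eqem\convex R$ and monotonicity of the relevant closures, appealing to the $\convex{}$-characterisation $\convex R\subseteq\convex S \Leftrightarrow R\subh S\subs R$). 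Clause~(6) is the real content: I would show $R \subh S \Rightarrow T\seq R \subh T\seq S$ and $R\subs S \Rightarrow T\seq R \subs T\seq S$, from which the $\subem$ case follows. For $\subh$: $R \subh S$ means $R \subseteq \down S$, so $T \seq R \subseteq T \seq \down S$ by monotonicity of Peleg composition in its second argument (stated in Section~\ref{SS:multirelations}); then $T\seq\down S = \down{(T\seq S)}$ by Lemma~\ref{lemma.inner-deterministic-upclosed}(1), and $T\seq R\subseteq\down{(T\seq S)}$ is exactly $T\seq R\subh T\seq S$. The Smyth case is the subtle one, since there is no unconditional analogue $T\seq\up S = \up{(T\seq S)}$ — Lemma~\ref{lemma.inner-deterministic-upclosed}(2) needs $T$ inner deterministic, and the example right after it shows it can fail. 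The fix is to use the $\up{}$-monotone direction only: $R \subs S$ means $\up S\subseteq\up R$, so $T\seq\up S \subseteq T\seq\up R$, and then one needs $\up{(T\seq S)}\subseteq T\seq\up S$ and $T\seq\up R\subseteq\up{(T\seq R)}$, i.e. the two containments $\up{(T\seq S)}\subseteq T\seq\up S$. The first of these comes from $T\seq S\subseteq T\seq\up S$ (monotonicity) followed by $\up{}$-monotonicity together with the fact that $T\seq\up S$ is up-closed — which holds because $\up S$ is union-closed, hence $\up{(T\seq\up S)} = \up{T}\seq\up{\up S}$-type reasoning, or more directly because up-closure of the second factor propagates through Peleg composition; this last point (that Peleg composition with an up-closed second argument yields an up-closed result) is the step I expect to need to pin down carefully, and is the main obstacle. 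Once that is in hand, $\up S\subseteq\up R \Rightarrow \up{(T\seq S)} \subseteq T\seq\up S\subseteq T\seq\up R = \up{(T\seq R)}$ when $T\seq\up R$ is up-closed, giving $T\seq R\subs T\seq S$, and $\subem$ follows.
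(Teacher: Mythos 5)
The paper states Proposition~\ref{proposition.dmonpreordered} without proof, so your argument can only be judged on its own terms. Clauses (1)--(5) are essentially right: reducing everything to inclusions of closures via $\down{(R \iu S)} = \down{R} \iu \down{S}$, $\down{(R \ii S)} = \down{R} \cap \down{S}$, Lemma~\ref{lemma.iuiiquantale} and their duals is exactly what is needed (for the precongruence of $\up{}$ and $\convex{}$ with respect to $\subh$ you still owe a line -- for instance $\down{(\up{R})} = \dom(R) U$ and $\down{(\convex{R})} = \down{R}$ -- but these are easy). One slip in (4): $\icpl{}$ \emph{preserves} $\subseteq$ (it is right composition with the bijective relation $C$), it does not reverse it; your step $\down{R} \subseteq \down{S} \Leftrightarrow \icpl{(\down{R})} \supseteq \icpl{(\down{S})}$ is therefore wrong, but a second sign error in your final step cancels it and the conclusion $R \subh S \Leftrightarrow \icpl{S} \subs \icpl{R}$ is correct -- the order reversal comes solely from the convention $P \subs Q \Leftrightarrow \up{Q} \subseteq \up{P}$, not from $\icpl{}$.

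The genuine gap is in clause (6), Smyth case, and it is precisely the step you flag as the main obstacle: $T \seq \up{S}$ need \emph{not} be up-closed, and $\up{(T \seq S)} \subseteq T \seq \up{S}$ is false. Take $T = \iuone$: then $T \seq \up{S} = \iuone$ while $\up{(T \seq S)} = \up{\iuone} = U$; up-closedness of the second factor cannot repair the empty inner choices contributed by $T$ (compare the paper's example that Peleg compositions of up-closed multirelations need not be up-closed). Fortunately you do not need this containment. To prove $T \seq R \subs T \seq S$ it suffices, by the definition of $\subs$, to show $T \seq S \subseteq \up{(T \seq R)}$, and this follows directly: $R \subs S$ gives $S \subseteq \up{R}$, hence $T \seq S \subseteq T \seq \up{R}$ by monotonicity of $\seq$ in its second argument, and $T \seq \up{R} = T \seq (R \iu U) \subseteq (T \seq R) \iu (T \seq U) \subseteq (T \seq R) \iu U = \up{(T \seq R)}$ by the subdistributivity law of Section~\ref{subsection.union-closed} and monotonicity of $\iu$. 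So drop the detour through $\up{(T \seq S)}$ and the argument closes; the Hoare case via $T \seq \down{S} = \down{(T \seq S)}$ (Lemma~\ref{lemma.inner-deterministic-upclosed}) and the Egli--Milner case by conjunction are fine as you have them.
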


\begin{remark}
  Similarly, the three equivalences $\eqh$, $\eqs$ and $\eqem$ are congruences with respect to $\cup$, $\iu$, $\ii$, $\up{}$, $\down{}$, $\convex{}$ and $\icpl{}$, Peleg composition preserves them in its second argument, and the inner isomorphism $\icpl{}$ satisfies $R \eqh S \Leftrightarrow \icpl{R} \eqs \icpl{S}$, $R \eqs S \Leftrightarrow \icpl{R} \eqh \icpl{S}$ and $R \eqem S \Leftrightarrow \icpl{R} \eqem \icpl{S}$.
  Unlike $\subh$ and $\subs$, $\subem$ has no least or greatest element.
\end{remark}

\begin{proposition}
  \label{proposition.quotient-quantale}
  ~
  \begin{enumerate}
  \item $(\Mult(X,Y)/{\eqh},\le_\Ho,\iu_\Ho,1_{\iu_\Ho})$, with $[R] \le_\Ho [S] \Leftrightarrow \down{R} \subseteq \down{S}$, $[R] \iu_\Ho [S] = [R \iu S]$, $1_{\iu_\Ho} = \{\iuone\}$, $[R] \ii_\Ho [S] = [\down{R} \cap \down{S}]$ and $1_{\ii_\Ho} = \{U\}$, is isomorphic to $\Mult_\downarrow(X,Y)$.
  \item $(\Mult(X,Y)/{\eqs},\le_\Sm,\ii_\Sm,1_{\ii_\Sm})$, with $[R] \le_\Sm [S] \Leftrightarrow \up{R} \supseteq \up{S}$, $[R] \iu_\Sm [S] = [\up{R} \cap \up{S}]$, $1_{\iu_\Sm} = \{U\}$, $[R] \ii_\Sm [S] = [R \ii S]$ and $1_{\ii_\Sm} = \{\iione\}$, is isomorphic to $\Mult_\uparrow(X,Y)$.
  \item $(\Mult(X,Y)/{\eqem},\le_\EM,\iu_\EM,1_{\iu_\EM})$, with $[R] \le_\EM [S] \Leftrightarrow \convex{R} \subseteq \convex{S}$, $[R] \iu_\EM [S] = [R \iu S]$, $1_{\iu_\EM} = \{\iuone\}$, $[R] \ii_\EM [S]=[R \ii S]$ and $1_{\ii_\EM} = \{\iione\}$, is isomorphic to $\Mult_\updownarrow(X,Y)$.
  \item $\icpl{[R]_\Ho} = [\icpl{R}]_\Sm$, $\icpl{[R]_\Sm} = [\icpl{R}]_\Ho$ and $\icpl{[R]_\EM} = [\icpl{R}]_\EM$.
  \end{enumerate}
\end{proposition}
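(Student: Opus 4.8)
The plan is, for each of the three equivalences, to exhibit the canonical quotient map onto the corresponding lattice of closed multirelations, to check that it is a bijection, and then to transport along it the (double) quantale — respectively inf-lattice — structure already established in Proposition~\ref{proposition.iuiiquantale2}, finally verifying that the transported operations are exactly the ones named in the statement. Well-definedness of the operations on the quotients is guaranteed throughout by the fact that $\eqh$, $\eqs$ and $\eqem$ are congruences for $\cup$, $\iu$, $\ii$, $\up{}$, $\down{}$ and $\convex{}$ (remark after Proposition~\ref{proposition.dmonpreordered}).

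For part~(1) I would take $q_\Ho : \Mult(X,Y)/{\eqh} \to \Mult_\downarrow(X,Y)$, $[R]_{\eqh} \mapsto \down{R}$. It is well defined and injective because $R \eqh S$ holds exactly when $\down{R} = \down{S}$, and surjective because $\down{T} = T$ for down-closed $T$; its inverse is $T \mapsto [T]_{\eqh}$. Since $[R]\le_\Ho[S] \Leftrightarrow \down{R}\subseteq\down{S}$, the map $q_\Ho$ is an order isomorphism onto $(\Mult_\downarrow(X,Y),\subseteq)$, and it preserves arbitrary sups because $\down{}$ preserves arbitrary unions (Lemma~\ref{lemma.iuiiquantale}) and $\Mult_\downarrow(X,Y)$ is closed under unions. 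For the monoidal multiplication, $q_\Ho([R]\iu_\Ho[S]) = \down{(R\iu S)} = \down{R}\iu\down{S}$ by the nucleus identity from the proof of Proposition~\ref{proposition.iuiiquantale2}, and $q_\Ho(1_{\iu_\Ho}) = \down{\iuone} = \iuone$; for the second operation, $q_\Ho([R]\ii_\Ho[S]) = \down{(\down{R}\cap\down{S})} = \down{R}\cap\down{S} = \down{R}\ii\down{S}$ using $\ii = \cap$ on down-closed multirelations, and $q_\Ho(1_{\ii_\Ho}) = \down{U} = U$. Hence $q_\Ho$ is the required isomorphism onto the structure of Proposition~\ref{proposition.iuiiquantale2}(1). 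For part~(2) I would proceed symmetrically with $q_\Sm : [R]_{\eqs}\mapsto\up{R}$, using the dual nucleus identity $\up{(R\ii S)} = \up{R}\ii\up{S}$ and the collapse $\iu = \cap$ on up-closed multirelations; equivalently part~(2) follows from part~(1) by postcomposing with $\icpl{}$ and invoking Lemma~\ref{lemma.updownprops} and Proposition~\ref{proposition.iuiiquantale2}(3). Here one must note that, the Smyth order being the one induced by $\subs$, we have $[R]\le_\Sm[S] \Leftrightarrow \up{R}\supseteq\up{S}$, so $q_\Sm$ reverses inclusion and $\iu_\Sm$ is the lattice join of the quotient, matched by $q_\Sm$ with the meet $\cap = \iu$ of $\Mult_\uparrow(X,Y)$.

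For part~(3) I would take $q_\EM : \Mult(X,Y)/{\eqem} \to \Mult_\updownarrow(X,Y)$, $[R]_{\eqem}\mapsto\convex{R}$. Since $\up{\convex{R}} = \up{R}$ and $\down{\convex{R}} = \down{R}$, one has $\convex{R} = \convex{S} \Leftrightarrow R\eqem S$, so $q_\EM$ is a well-defined injection, and it is surjective because $\convex{T} = T$ for convex-closed $T$. By the definition of $\le_\EM$ it is an order isomorphism onto $(\Mult_\updownarrow(X,Y),\subseteq)$, and it preserves arbitrary intersections because $\convex{(-)}$ does (Proposition~\ref{proposition.iuiiquantale2}(5)), hence it is an inf-lattice isomorphism onto the inf-lattice of Proposition~\ref{proposition.iuiiquantale2}(4); the operations $\iu_\EM$, $\ii_\EM$ transport to $\convex{(R\iu S)}$ and $\convex{(R\ii S)}$, the units to $\convex{\iuone} = \iuone$ and $\convex{\iione} = \iione$. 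Finally, for part~(4), $\icpl{}$ is well defined on the quotients because it interchanges $\eqh$ with $\eqs$ and preserves $\eqem$ (remark after Proposition~\ref{proposition.dmonpreordered}); the three identities are then precisely the translations under $q_\Ho$, $q_\Sm$ and $q_\EM$ of $\icpl{\down{R}} = \up{\icpl{R}}$, $\icpl{\up{R}} = \down{\icpl{R}}$ and $\icpl{\convex{R}} = \convex{\icpl{R}}$ from Lemma~\ref{lemma.updownprops}, with the isomorphism property inherited from Proposition~\ref{proposition.iuiiquantale2}(3).

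I expect the main obstacle to be the last verification in each part — that after transport the operations literally coincide with the named ones — which rests on the two nucleus identities $\down{(R\iu S)} = \down{R}\iu\down{S}$ and $\up{(R\ii S)} = \up{R}\ii\up{S}$ together with the collapses $\ii = \cap$ on down-closed and $\iu = \cap$ on up-closed multirelations, plus the congruence properties used for well-definedness. A secondary point needing care is the bookkeeping of order directions: the Hoare and Egli--Milner quotient orders coincide with inclusion of the respective closures, whereas the Smyth quotient order coincides with reverse inclusion of up-closures, so the isomorphism in part~(2) onto $\Mult_\uparrow(X,Y)$ is carried by the order-reversing map $q_\Sm$, consistent with the inner duality that relates part~(2) to part~(1).
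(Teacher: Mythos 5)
Your proposal is correct and follows essentially the same route as the paper: the isomorphisms are the maps $[R]\mapsto\down{R}$, $[R]\mapsto\up{R}$, $[R]\mapsto\convex{R}$, and the verifications rest on the same nucleus identities, the collapses $\ii=\cap$ and $\iu=\cap$ on closed multirelations, and Lemma~\ref{lemma.updownprops} for part~(4). The only difference is presentational: the paper obtains the quotient structure and the isomorphism in one stroke from the homomorphism and isomorphism theorems of universal algebra applied to the quantale (respectively inf-lattice) morphisms $\down{(-)}$, $\up{(-)}$ and $\convex{(-)}$ of Proposition~\ref{proposition.iuiiquantale2}, whereas you verify bijectivity and operation-preservation by hand.
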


\begin{proof}
  The following diagram illustrates the construction.
  \begin{equation*}
    \begin{tikzcd}
      \Mult(X,Y)/{\eqs} \ar[d, "\iota'"'] & \Mult(X,Y) \ar[r, "\varphi"] \ar[l, "\varphi'"'] \ar[dr, "\down{(-)}"'] \ar[dl, "\up{(-)}"] & \Mult(X,Y)/{\eqh} \ar[d, "\iota"] \\[3ex]
      \Mult_\uparrow(X,Y) \ar[rr, "\icpl{}"] & & \Mult_\downarrow(X,Y)
    \end{tikzcd}
  \end{equation*}
  By Proposition~\ref{proposition.iuiiquantale2}, $\down{(-)} : \Mult(X,Y) \to \Mult_\downarrow(X,Y)$ is a quantale homomorphism.
  It follows from standard results of universal algebra~\cite[Theorem 6.7]{BurrisSankappanavar1981} that its kernel, $\eqh$, is a congruence that preserves the quantale operations.
  The associated quotient algebra $\Mult(X,Y)/{\eqh}$ is an algebra with the same signature and quantale operations defined as in (1).
  The natural map $\varphi : \Mult(X,Y) \to \Mult(X,Y)/{\eqh}$, which associates each element with its equivalence class, is thus a bijective quantale morphism~\cite[Theorem 6.10]{BurrisSankappanavar1981}.
  By~\cite[Theorem 6.12]{BurrisSankappanavar1981}, there is then an isomorphism $\iota : \Mult(X,Y)/{\eqh} \to \Mult_\downarrow(X,Y)$, here given by $\iota : [R] \mapsto \down{R}$, such that the above diagram commutes.

  The order isomorphism between $\le_\Ho$ and $\subseteq$ is established by the fact that $[R] \le_\Ho [S] \Leftrightarrow R \subh S$ (by definition of $\eqh$) and by Lemma~\ref{lemma.order-emb}.
  It remains to consider inner intersection $\ii_\Ho$ and its unit.
  The inner intersection $[R] \ii_\Ho [S] = [R \ii S]$ is mapped by the isomorphism to $\down{(R \ii S)} = \down{R} \ii \down{S} = \down{R} \cap \down{S}$.
  Finally, the associated unit is mapped to $U$.

  The proofs for (2) and (3) are similar.
  The proof of (3) uses the kernel $\eqem$ of the inf-lattice morphism $\convex{}$.
  Finally, (4) is obvious.
\end{proof}

\begin{remark}
  By construction, $R \iu R \eqh R \cap R = R$ and $R \ii R \eqs R \cap R = R$ due to the collapse of structure.
  Yet $R \ii R \eqh R \cap R = R$ and $R \iu R \eqs R \cap R = R$ need not hold.
  This is a consequence of Examples~\ref{example.not-idempotent} and~\ref{example.not-idempotent2}, recalling that $\down{(R \iu S)} = \down{R} \iu \down{S}$ and dually $\up{(R \ii S)} = \up{R} \ii \up{S}$.
\end{remark}

The question thus remains whether $\subh$ and $\subs$ are natural orders on certain subalgebras of $\Mult(X,Y)$.
We provide an answer in the next section.

We conclude this section with a collection of properties, proved using Isabelle.

\begin{lemma}
  \label{lemma.preorder-inner}
  Let $R, S, T : X \rto \Pow Y$.
  Then
  \begin{enumerate}
  \item $R \subseteq S \Rightarrow R \cap T \subh S \cap T \subs R \cap T$,
  \item $R \ii S \subh R \subh R \iu R$, $R \ii R \subs R \subs R \iu S$ and $R \ii R \subem R \subem R \iu R$,
  \item $R \ii S \subh R \iu S$, $R \ii S \subs R \iu S$ and $R \ii S \subem R \iu S$,
  \item $R \ii S$ is the inf and $R \cup S$ the sup of $R$ and $S$, up-to $\eqh$, with respect to $\subh$,
  \item $R \iu S$ is the sup and $R \cup S$ the inf of $R$ and $S$, up-to $\eqs$, with respect to $\subs$.
  \end{enumerate}
\end{lemma}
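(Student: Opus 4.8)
The plan is to reduce every clause to elementary reasoning with the explicit descriptions
\[
  \up{R} = \{ (a,A) \mid \exists (a,B) \in R .\ B \subseteq A \}, \qquad
  \down{R} = \{ (a,A) \mid \exists (a,B) \in R .\ A \subseteq B \}
\]
from Section~\ref{section.up-down}, together with $\up{(R \iu S)} = \up{R} \cap \up{S}$ and $\down{(R \ii S)} = \down{R} \cap \down{S}$ from Proposition~\ref{proposition.iuiiquantale2}. The single workhorse is that $P \subseteq Q$ implies $P \subh Q$ and $Q \subs P$, which is immediate since $\up{(-)}$ and $\down{(-)}$ are inflationary: $P \subseteq Q \subseteq \down{Q}$ and $P \subseteq Q \subseteq \up{Q}$. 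Because $R \subseteq S$ gives $R \cap T \subseteq S \cap T$, this settles part~(1) at once.

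For part~(2), every pair $(a,A \cap B) \in R \ii S$ satisfies $A \cap B \subseteq A$ with $(a,A) \in R$, hence lies in $\down{R}$, so $R \ii S \subh R$; dually every $(a,A \cup B) \in R \iu S$ has $A \subseteq A \cup B$ with $(a,A) \in R$, so $R \subs R \iu S$. The links $R \subh R \iu R$ and $R \ii R \subs R$ then follow from $R \subseteq R \iu R$ and $R \subseteq R \ii R$ (recorded in Section~\ref{section.inner-ops}) by the observation above. Combining these with the specialisation $S := R$ where needed and with transitivity of $\subh$, $\subs$ and $\subem$ yields the three displayed chains, including the $\subem$ one. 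Part~(3) applies the same two containments to a single pair: from $(a,A \cap B) \in R \ii S$, $(a,A \cup B) \in R \iu S$ and $A \cap B \subseteq A \cup B$ one reads off $R \ii S \subh R \iu S$ and $R \ii S \subs R \iu S$, hence $R \ii S \subem R \iu S$.

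For part~(4) I would show that $R \ii S$ is a lower bound of $R$ and $S$ for $\subh$ exactly as in part~(2) (and symmetrically $R \ii S \subh S$ using $A \cap B \subseteq B$), and that it is a greatest one: if $T \subh R$ and $T \subh S$ then $T \subseteq \down{R} \cap \down{S} = \down{(R \ii S)}$ by Proposition~\ref{proposition.iuiiquantale2}, hence $T \subh R \ii S$. Symmetrically $R \cup S$ is an upper bound of $R$ and $S$ by the observation above, and a least one: $R \subh T$ and $S \subh T$ give $R \cup S \subseteq \down{T}$, that is, $R \cup S \subh T$. Uniqueness up-to $\eqh$ is the standard fact that any two greatest lower bounds, resp.\ any two least upper bounds, of a preorder are equivalent. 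Part~(5) is the image of part~(4) under the inner isomorphism: by Proposition~\ref{proposition.dmonpreordered} the map $\icpl{(-)}$ turns $\subh$ into $\subs$, and by Proposition~\ref{proposition.iuiiquantale1} it sends $\ii$ to $\iu$ and $\cup$ to $\cup$, so infima and suprema for $\subh$ become suprema and infima for $\subs$; alternatively one repeats the argument of part~(4) verbatim with $\up{(-)}$, $\iu$ and $\up{(R \iu S)} = \up{R} \cap \up{S}$.

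I expect no genuine obstacle: the content is bookkeeping. The only points needing care are keeping the two closure-based preorders oriented correctly -- $R \subh S$ closes the right-hand side of the inclusion while $R \subs S$ closes the left -- and, in parts~(4) and~(5), separating the ``is a bound'' obligation from the ``is a best bound'' obligation and noting that uniqueness up-to $\eqh$ and $\eqs$ is automatic for preorders.
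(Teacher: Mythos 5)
Your proof is correct. Note that the paper itself gives no written proof of Lemma~\ref{lemma.preorder-inner} --- it is one of the facts the authors state as ``proved using Isabelle'' --- so there is no textual argument to compare against; your elementary derivation fills that gap. All the steps check out: the workhorse observation that $P \subseteq Q$ yields both $P \subh Q$ and $Q \subs P$ (because the closures are inflationary) does settle part~(1) and the $R \subh R \iu R$ and $R \ii R \subs R$ links in part~(2); the pointwise pairings $A \cap B \subseteq A$, $A \subseteq A \cup B$ and $A \cap B \subseteq A \cup B$ give the remaining links in parts~(2) and~(3); and in parts~(4) and~(5) the bound/best-bound split via $\down{(R \ii S)} = \down{R} \cap \down{S}$ and $\up{(R \iu S)} = \up{R} \cap \up{S}$ from Proposition~\ref{proposition.iuiiquantale2}, with $R \cup S$ handled directly from the definitions of $\subh$ and $\subs$, is exactly what is needed, and uniqueness up-to $\eqh$ or $\eqs$ is indeed automatic for preorders. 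The transfer of part~(4) to part~(5) through the order-reversing isomorphism $\icpl{}$ of Proposition~\ref{proposition.dmonpreordered} is also sound, since $\icpl{}$ is a surjection sending $\ii$ to $\iu$ and preserving $\cup$.
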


Items (4) and (5) may seem to contradict Proposition~\ref{proposition.quotient-quantale}.
Yet preorders equipped with sups and infs up-to preorder-equivalence need not form lattices, and these facts are not related to the failure of idempotence of $\iu$ and $\ii$.
In particular, recall that $R \ii S \eqh R \cap S \eqs R \iu S$ are in fact infs, while the problematic operations $\iu$ and $\ii$ are ignored in $\subh$ and $\subs$, respectively -- $\cup$ is used instead in both preorders.

\subsection{Inner preorders on special multirelations}
\label{subsection.preorder-special}

In this section, we consider $\subh$, $\subs$ and $\subem$ on subclasses of multirelations.
First we consider cases for which these preorders become partial orders.

\begin{proposition}~
  \label{proposition.subh-subs-order}
  \begin{enumerate}
  \item On inner deterministic multirelations, preorders $\subh$ and $\subs$ coincide with $\subseteq$ and $\supseteq$, respectively, whence $\subem$ is the discrete order.
  \item Preorder $\subem$ is a partial order on inner univalent multirelations.
  \item Preorders $\subh$, $\subs$ and $\subem$ are partial orders on outer univalent multirelations.
  \item The three partial orders coincide on outer deterministic multirelations.
  \end{enumerate}
\end{proposition}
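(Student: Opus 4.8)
The plan is to treat the four parts largely separately, using throughout that a preorder is a partial order precisely when its associated equivalence (the preorder intersected with its converse) is the identity, together with the identifications $R \eqh S \Leftrightarrow \down R = \down S$, $R \eqs S \Leftrightarrow \up R = \up S$ and $R \eqem S \Leftrightarrow \convex R = \convex S$ recorded in Section~\ref{section.inner-preorders}. So parts~(2) and~(3) reduce to injectivity of the relevant closure operator on the subclass in question.

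For~(1), I would expand the fibrewise descriptions of $\up{}$ and $\down{}$ from Section~\ref{section.up-down}, using that every pair $(a,B)$ of an inner deterministic multirelation has $B$ a singleton. For inner deterministic $R$ and $S$, a singleton $\{b\}$ lies in $\down S$ at $a$ iff $(a,\{b\}) \in S$, so $R \subseteq \down S$ is equivalent to $R \subseteq S$; dually $\{c\}$ lies in $\up R$ at $a$ iff $(a,\{c\}) \in R$, so $S \subseteq \up R$ is equivalent to $S \subseteq R$. Hence $\subh$ restricts to $\subseteq$, $\subs$ restricts to $\supseteq$, and $\subem = \subh \cap \subs$ becomes the discrete order.

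For~(2), I would show that every inner univalent multirelation $R$ is convex-closed: if $B \subseteq A \subseteq C$ with $(a,B),(a,C) \in R$ — so that $B$ and $C$ are each empty or a singleton — then $A$ is empty or a singleton, and either $A = \emptyset$, forcing $B = \emptyset$ and hence $(a,\emptyset) \in R$, or $A = C$ and $(a,C) \in R$; either way $(a,A) \in R$, so $\convex R \subseteq R$, the reverse inclusion being automatic. Thus $\convex{}$ acts as the identity on inner univalent multirelations, $\eqem$ restricts to equality there, and $\subem$ — already a preorder — is a partial order. For~(3), an outer univalent multirelation is a partial function $X \to \Pow Y$, so for each $a$ there is at most one $B_a$ with $(a,B_a) \in R$; it is recovered from $\down R$ via $B_a = \bigcup \{A \mid (a,A) \in \down R\}$ and from $\up R$ via $B_a = \bigcap \{A \mid (a,A) \in \up R\}$, an $a$ having a pair in either closure exactly when it has one in $R$. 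Hence $\down{}$ and $\up{}$ are injective on outer univalent multirelations, so $\subh$ and $\subs$ are partial orders there; and since $\subem$ refines $\subh$, its symmetric part $\eqem$ refines $\eqh$, which is now the identity, so $\subem$ is a partial order as well.

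For~(4), it suffices to show $\subh$ and $\subs$ agree on outer deterministic multirelations, since $\subem = \subh \cap \subs$ then coincides with both. An outer deterministic multirelation is a total function $X \to \Pow Y$, assigning a unique set to every $a$ — say $B_a$ for $R$ and $C_a$ for $S$; unfolding $R \subh S$ and $R \subs S$ and using outer determinism of the opposite argument to pin down the witnessing set, both conditions become $\forall a .\ B_a \subseteq C_a$, so $\subh$, $\subs$ and hence $\subem$ coincide. The main obstacle I anticipate is not any single calculation but keeping the case analyses in~(2) and~(3) honest — in particular the empty inner set and an $a$ with no pair in $R$ — and citing the fibrewise descriptions of the closures accurately; the remaining steps are short unfoldings of the definitions.
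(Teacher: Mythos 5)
Your proposal is correct, but it takes a genuinely different route from the paper's proof. The paper works relation-algebraically throughout: it first establishes the key inclusion $\Omega^\converse \cap \iuatoms^\converse \iuatoms \subseteq \Id$ and reuses it in parts~(1) and~(2); part~(2) is handled by bounding $S^\converse R \subseteq U \iuone \cup \iuone^\converse \iuatoms \cup \iuatoms^\converse \iuatoms$ and analysing three cases; part~(3) derives $S^\converse R \subseteq \Omega \cap \Omega^\converse = \Id$ from univalence and then $R = R \cap S\Omega^\converse \subseteq S S^\converse R \subseteq S$; part~(4) uses a Schr\"oder-style equivalence $R \subseteq \down{S} \Leftrightarrow S^\converse \subseteq \Omega^\converse R^\converse \Leftrightarrow S \subseteq \up{R}$. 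You instead argue pointwise from the fibrewise descriptions of $\up{}$, $\down{}$ and $\convex{}$: in~(1) inner determinism forces the witnessing singletons to coincide (correctly using inner totality to rule out the empty witness in $\up{R}$); in~(2) you observe that $\convex{}$ is the identity on inner univalent multirelations, so $\eqem$ collapses to equality -- a cleaner and slightly stronger fact than the paper's direct antisymmetry argument; in~(3) you show $\up{}$ and $\down{}$ are injective on outer univalent multirelations by reconstructing the unique fibre as $\bigcup$ of the down-closure or $\bigcap$ of the up-closure; and in~(4) both preorders unfold to the same pointwise inclusion of fibres. Your version is more elementary and, in~(2) and~(3), actually delivers the sharper statement (noted by the paper only after the proposition) that $\eqh$, $\eqs$ and $\eqem$ are equality on these subclasses; what the paper's version buys is that every step stays inside the relation-algebraic calculus the article is promoting as a basis for future axiomatisation.
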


\begin{proof}
  Note that
  \begin{equation*}
    1 (\Id \cup -\Omega^\converse) 1^\converse = 1 1^\converse \cup 1 (-\Omega^\converse) 1^\converse = \Id \cup -(1 \Omega^\converse 1^\converse) = \Id \cup -(1 {\in}^\converse) = \Id \cup -\Id = U .
  \end{equation*}
  Hence $\iuatoms^\converse \iuatoms = 1^\converse U 1 \subseteq 1^\converse 1 (\Id \cup -\Omega^\converse) 1^\converse 1 \subseteq \Id \cup -\Omega^\converse$.
  Thus $\Omega^\converse \cap \iuatoms^\converse \iuatoms \subseteq \Id$.

  For (1), assume that $R \subh S$ for inner deterministic $R$ and $S$.
  Then
  \begin{equation*}
    R = R \cap \iuatoms \subseteq \down{S} \cap \iuatoms = (S \cap \iuatoms) \Omega^\converse \cap \iuatoms = S (\Omega^\converse \cap \iuatoms^\converse \cap \iuatoms) = S (\Omega^\converse \cap \iuatoms^\converse \iuatoms) \subseteq S .
  \end{equation*}
  The converse implication follows by $R \subseteq S \subseteq \down{S}$.
  Moreover, from $R \subs S$ we obtain
  \begin{equation*}
    S = S \cap \iuatoms \subseteq \up{R} \cap \iuatoms = (R \cap \iuatoms) \Omega \cap \iuatoms = R (\Omega \cap \iuatoms^\converse \cap \iuatoms) = R (\Omega \cap \iuatoms^\converse \iuatoms) \subseteq R .
  \end{equation*}
  The converse implication follows by $S \subseteq R \subseteq \up{R}$.

  For (2), we prove antisymmetry of $\subem$ in the inner univalent case.
  Suppose $R \subem S$ and $S \subem R$ for inner univalent $R$ and $S$.
  We show $R \subseteq S$.
  The assumption implies that
  \begin{equation*}
    R = R \cap \down{S} \cap \up{S} = R \cap S \Omega^\converse \cap S \Omega \subseteq S (\Omega^\converse \cap S^\converse R) \cap S \Omega.
  \end{equation*}
  Since $R$ and $S$ are inner univalent, we have $S^\converse R \subseteq (\iuone \cup \iuatoms)^\converse (\iuone \cup \iuatoms) \subseteq U \iuone \cup \iuone^\converse \iuatoms \cup \iuatoms^\converse \iuatoms$.
  Hence, by distributivity, it suffices to consider the following three cases:
  \begin{itemize}
  \item $S (\Omega^\converse \cap U \iuone) \cap S \Omega \subseteq U \iuone \cap S \Omega = S (\Omega \cap U \iuone) \subseteq S \Omega \iuone^\converse \iuone = S (\down{\iuone})^\converse \iuone = S \iuone^\converse \iuone \subseteq S$ using $\iuone^\converse \iuone \subseteq \Id$.
  \item $S (\Omega^\converse \cap \iuone^\converse \iuatoms) \cap S \Omega = \emptyset \subseteq S$ using $\Omega^\converse \cap \iuone^\converse \iuatoms \subseteq \iuone^\converse (\iuatoms \cap \iuone \Omega^\converse) \subseteq U (\iuatoms \cap \down{\iuone}) = U (\iuatoms \cap \iuone) = U \emptyset = \emptyset$.
  \item $S (\Omega^\converse \cap \iuatoms^\converse \iuatoms) \cap S \Omega \subseteq S \Id = S$.
  \end{itemize}
  The proof of $S\subseteq R$ follows along similar lines.

  For (3), we first prove antisymmetry of $\subh$.
  Suppose $R \subh S$ and $S \subh R$, that is, $R \eqh S$, for univalent $R$ and $S$.
  Then $S^\converse R \subseteq S^\converse \down{S} = S^\converse S \Omega^\converse \subseteq \Omega^\converse$ and likewise $R^\converse S \subseteq \Omega^\converse$ by univalence of $R$ and $S$.
  Therefore $S^\converse R \subseteq \Omega^\converse \cap \Omega = \Id$.
  Thus $R = R \cap \down{S} = R \cap S \Omega^\converse \subseteq S S^\converse R \subseteq S$ and $S \subseteq R$ follows by opposition.
  This proves $R = S$.

  Antisymmetry of $\subs$ is proved along similar lines.
  Antisymmetry of $\subem$ is then immediate.

  For (4), suppose $R$ and $S$ are outer deterministic.
  Then $\subh$ and $\subs$ coincide because
  \begin{equation*}
    R \subseteq \down{S} \Leftrightarrow S^\converse \subseteq \Omega^\converse R^\converse \Leftrightarrow S \subseteq \up{R}
  \end{equation*}
  and the claim for $\subem$ follows.
\end{proof}

It is immediate from the proof of Proposition~\ref{proposition.subh-subs-order} that, for $R$, $S$ outer univalent or inner deterministic,
\begin{equation*}
  R \eqh S \Leftrightarrow R \eqs S \Leftrightarrow R \eqem S \Leftrightarrow R = S.
\end{equation*}

Next we point out a case when $\subh$ and $\subs$ become natural orders.

\begin{lemma}
  \label{lemma.nat-order}
  Let $R$ and $S$ be outer univalent.
  Then
  \begin{equation*}
    R \subs S \Leftrightarrow R \iu S = S \qquad
    \text{ and } \qquad
    R \subh S \Leftrightarrow R \ii S = R.
  \end{equation*}
\end{lemma}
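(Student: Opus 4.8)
The plan is to reduce both equivalences to facts already recorded above, so that essentially no new computation is needed. For the Smyth case, recall the identity noted just after Lemma~\ref{lemma.order-emb}, $R \subs S \Leftrightarrow \up{(R \iu S)} = \up{R} \cap \up{S} = \up{S}$, and the fact (Example following that lemma) that $S = R \iu S$ implies $R \subs S$ for arbitrary multirelations. The latter already gives $R \iu S = S \Rightarrow R \subs S$ without any hypothesis, so the only content is the converse. Assuming $R \subs S$, we get $\up{(R \iu S)} = \up{S}$, i.e.\ $R \iu S \eqs S$. Now $R \iu S$ is outer univalent, since inner union preserves outer univalence (Lemma~\ref{lemma.ii-iu-preservation}) and $R, S$ are, and $S$ is outer univalent by hypothesis; and on outer univalent multirelations $\eqs$ coincides with equality (the remark right after the proof of Proposition~\ref{proposition.subh-subs-order}, equivalently antisymmetry of $\subs$ there). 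Hence $R \iu S = S$.

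The Hoare case is obtained dually. The cleanest option is to repeat the argument with $\ii$, $\down{(-)}$, $\subh$, $\eqh$ in place of $\iu$, $\up{(-)}$, $\subs$, $\eqs$ — using that inner intersection preserves outer univalence (again Lemma~\ref{lemma.ii-iu-preservation}), that $R = R \ii S$ implies $R \subh S$, that $R \subh S \Leftrightarrow \down{(R \ii S)} = \down{R} \cap \down{S} = \down{R}$, and that $\eqh$ is equality on outer univalent multirelations. Alternatively one can transport the Smyth case along the inner isomorphism: $R \subh S \Leftrightarrow \icpl{S} \subs \icpl{R}$ by Proposition~\ref{proposition.dmonpreordered}(4), the multirelations $\icpl{R} = RC$ and $\icpl{S} = SC$ are again outer univalent because $C$ is a bijective function, so the Smyth case gives $\icpl{S} \subs \icpl{R} \Leftrightarrow \icpl{S} \iu \icpl{R} = \icpl{R}$, and applying $\icpl{}$ to both sides with $\icpl{(P \iu Q)} = \icpl{P} \ii \icpl{Q}$ and involutivity turns this into $R \ii S = R$.

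The only point needing attention is that the two ingredients the argument rests on are genuinely available in the required form: (i) $\iu$ and $\ii$ preserve outer univalence, and (ii) $\eqs$ and $\eqh$ collapse to equality on outer univalent multirelations. Both are stated explicitly in the excerpt, so there is no real obstacle and the proof stays short. If a self-contained route is preferred, one can write outer univalent $R$ and $S$ as graphs of partial functions $a \mapsto f_R(a)$, $a \mapsto f_S(a)$; expanding the pointwise description of $\subs$ shows that $R \subs S$ says exactly $\dom(S) \subseteq \dom(R)$ together with $f_R(a) \subseteq f_S(a)$ for all $a \in \dom(S)$, which is precisely what $R \iu S = S$ asserts, and dually for $\subh$ and $\ii$ — but the structural argument above reuses the infrastructure already built and is cleaner.
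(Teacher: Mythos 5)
Your proof is correct and follows essentially the same route as the paper's: both reduce the nontrivial direction to showing $R \ii S \eqh R$ (resp.\ $R \iu S \eqs S$), invoke Lemma~\ref{lemma.ii-iu-preservation} to see that $R \ii S$ (resp.\ $R \iu S$) is outer univalent, and then conclude equality from antisymmetry of the preorders on outer univalent multirelations (Proposition~\ref{proposition.subh-subs-order}). The only cosmetic differences are that you start from the Smyth case and cite the displayed equivalence after Lemma~\ref{lemma.order-emb} directly, whereas the paper assembles the two inequalities $R \subh R \ii S$ and $R \ii S \subh R$ separately.
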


\begin{proof}
  Assuming $R \subh S$ we have $R \subseteq \down{S}$ and hence $R \subseteq \down{R} \cap \down{S} = \down{(R \ii S)}$ by Proposition~\ref{proposition.iuiiquantale2}.
  Thus $R \subh R \ii S$.
  By Lemma~\ref{lemma.preorder-inner}, $R \ii S \subh R$.
  Since $R \ii S$ is outer univalent by Lemma~\ref{lemma.ii-iu-preservation}, we obtain $R = R \ii S$ by Proposition~\ref{proposition.subh-subs-order}.
  The converse implication is immediate by Lemma~\ref{lemma.preorder-inner}.

  The proof for $\subs$ is similar.
\end{proof}

\begin{proposition}
  The outer deterministic multirelations form a lattice with respect to $\subh$ (which is equal to $\subs$ and $\subem$) with sup $\iu$ and inf $\ii$.
\end{proposition}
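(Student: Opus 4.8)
The plan is to derive the proposition as a corollary of the development that precedes it, since the outer deterministic multirelations have by now been furnished with all the structure needed. First I would record closure: by Lemma~\ref{lemma.ii-iu-preservation} both $\iu$ and $\ii$ map pairs of outer deterministic multirelations to outer deterministic multirelations, so they restrict to operations on this class. By Proposition~\ref{proposition.subh-subs-order}(3), $\subh$ is a partial order on outer univalent -- hence on outer deterministic -- multirelations, and by Proposition~\ref{proposition.subh-subs-order}(4) it coincides there with $\subs$ and $\subem$; I write $\le$ for this common partial order.

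Next I would identify $\iu$ with the join and $\ii$ with the meet. Outer deterministic multirelations are outer univalent, so Lemma~\ref{lemma.pfun-idem} makes $\iu$ and $\ii$ idempotent on them, and Proposition~\ref{proposition.iuiiquantale1} contributes commutativity and associativity; hence the outer deterministic multirelations form a commutative idempotent monoid -- i.e.\ a semilattice -- under each of $\iu$ and $\ii$. Lemma~\ref{lemma.nat-order} then states exactly that $\subs$ is the natural order of the join-semilattice with join $\iu$ (since $R \subs S \Leftrightarrow R \iu S = S$) and that $\subh$ is the natural order of the meet-semilattice with meet $\ii$ (since $R \subh S \Leftrightarrow R \ii S = R$). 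Feeding in the coincidence $\subs = \subh = \le$, the operation $\iu$ computes suprema and $\ii$ computes infima for the single partial order $\le$, and both stay in the class by the closure noted above; so the outer deterministic multirelations form a lattice with partial order $\le$, supremum $\iu$ and infimum $\ii$. One could equally extract the sup/inf characterisations directly from Lemma~\ref{lemma.preorder-inner}(4)--(5), the ``up-to $\eqh$'' and ``up-to $\eqs$'' caveats there being harmless since, as noted after Proposition~\ref{proposition.subh-subs-order}, $\eqh$ and $\eqs$ collapse to equality on outer univalent multirelations.

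I do not expect a genuine obstacle -- the statement is a repackaging of earlier results. The single point needing care is that the join structure is read off from $\subs$ while the meet structure is read off from $\subh$, so it is essential that these two preorders literally agree on the class in question; this is precisely Proposition~\ref{proposition.subh-subs-order}(4), and without it one would only obtain two separate semilattices rather than one lattice. A secondary subtlety is that a ``supremum up to $\eqs$'' in a preorder counts as an honest supremum here only because $\eqs$ is equality on outer univalent multirelations, and likewise for $\eqh$.
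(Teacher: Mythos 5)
Your proof is correct and follows essentially the same route as the paper: both rest on closure under $\iu$ and $\ii$ (Lemma~\ref{lemma.ii-iu-preservation}), the identification of $\subs$ and $\subh$ via Lemma~\ref{lemma.nat-order} as the natural orders of the $\iu$- and $\ii$-semilattices, and their coincidence on outer deterministic multirelations (Proposition~\ref{proposition.subh-subs-order}). The only cosmetic difference is that the paper verifies the two absorption laws equationally (using Lemma~\ref{lemma.preorder-inner} as the final input), whereas you read off the sup/inf characterisation directly from the semilattice structure; both packagings are standard and equivalent.
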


\begin{proof}
  Outer deterministic multirelations are closed with respect to $\iu$ and $\ii$ by Lemma~\ref{lemma.ii-iu-preservation}.
  Since $\iu$ and $\ii$ are associative and commutative, it remains to verify the absorption laws.
  First, $R \iu (R \ii S) = R$ is equivalent to $R \ii S \subs R$ by Lemma~\ref{lemma.nat-order}, which is $R \ii S \subh R$ by Proposition~\ref{proposition.subh-subs-order}, which holds by Lemma~\ref{lemma.preorder-inner}.
  Second, $R = R \ii (R \iu S)$ is equivalent to $R \subh R \iu S$ by Lemma~\ref{lemma.nat-order}, which is $R \subs R \iu S$ by Proposition~\ref{proposition.subh-subs-order}, which holds by Lemma~\ref{lemma.preorder-inner}.
\end{proof}

Deterministic multirelations are isomorphic to relations, and the inner preorders allow comparing their nondeterminism.

\begin{example}
  \label{example.subh-subs-order}
  Let $X = \{a,b,c\}$ and $R, S : X \rto \Pow X$ with $R = \{ (a,\{a\}), (a,\{a,b,c\}) \}$ and $S = R \cup \{ (a,\{a,b\}) \}$.
  Then $R \eqh S$ and $R \eqs S$ but $R \neq S$.
  Hence $\subh$ or $\subs$ are not partial orders on inner total multirelations.
  With the same example, $U R \eqh U S$ and $U R \eqs U S$ but $U R \neq U S$ shows that requiring totality does not suffice either.

  Moreover, on a one-element set all multirelations are inner univalent, $1 \eqh U$ and $-1 \eqs U$ but $1 \neq U \neq -1$.
  Hence inner univalence is also not enough to force a partial order.

  This example also shows that $\subem$ is not a partial order on total or inner total multirelations.
\end{example}

\begin{example}
  Since $\emptyset \subh 1$ and $1 \subs \emptyset$ but neither $\emptyset \subs 1$ nor $1 \subh \emptyset$ hold, preorders $\subh$ and $\subs$ are incomparable for univalent, inner univalent, inner total or inner deterministic multirelations.
  Since $1 \subh \iuone \cup 1$ and $\iuone \cup 1 \subs \iuone$ but neither $1 \subs \iuone \cup 1$ nor $\iuone \cup 1 \subh \iuone$, preorders $\subh$ and $\subs$ are incomparable for total multirelations.
\end{example}

\begin{example}
  In the deterministic case, $\subh$ need not coincide with $\subseteq$.
  For instance, $\{ (a,\emptyset) \}\subh \{ (a,\{a\}) \}$, but the two relations are disjoint.
\end{example}

\subsection{Decomposition of multirelations}

As an application of inner preorders, we present a decomposition theorem for multirelations.
We write $S \sqsubseteq_{\downarrow d} R$ if $S$ is univalent and inner deterministic, $\dom(S) = \dom(R - \iuone)$ and $S \subh R$.

\begin{lemma}
  \label{lemma.pfun-dec}
  Let $R : X \rto \Pow Y$ be univalent.
  Then $R = \dom(R) \iU_{S \sqsubseteq_{\downarrow d} R} S$ and each $S \sqsubseteq_{\downarrow d} R$ is isomorphic to a partial function from $X$ to $Y$.
\end{lemma}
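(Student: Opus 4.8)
The plan is to transport the statement, via the correspondence $S \leftrightarrow S 1^\converse$ between inner-deterministic univalent multirelations and partial functions, onto the relational decomposition of Lemma~\ref{lemma.sub-d} applied to the relation $R{\in}^\converse : X \rto Y$, which records for each $a$ the elements of the unique set that $R$ relates to $a$. First I would set up this correspondence: by Lemma~\ref{lemma.det-fix-ref}(3) a multirelation $S$ is inner deterministic iff $S = S 1^\converse 1$, and for such $S$ univalence of $S$ is equivalent to univalence of the relation $S 1^\converse$, while $\dom(S) = \dom(S 1^\converse)$. Using $\down{R} = R \Omega^\converse$ together with $1 \Omega = {\in}$ (so $\Omega^\converse 1^\converse = {\in}^\converse$), the condition $S \subh R$, that is $S \subseteq \down{R}$, becomes $S 1^\converse \subseteq R{\in}^\converse$; and $\dom(R - \iuone) = \dom(R{\in}^\converse)$, since an element has a non-empty $R$-image precisely when it has an $R{\in}^\converse$-successor. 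Hence $S \sqsubseteq_{\downarrow d} R$ iff $S = P 1$ for a partial function $P$ with $P \subseteq_d R{\in}^\converse$, the maps $S \mapsto S 1^\converse$ and $P \mapsto P 1$ being mutually inverse. In particular every $S \sqsubseteq_{\downarrow d} R$ is the graph $P 1$ of a partial function $P : X \rto Y$, which proves the second claim.

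For the decomposition, fix $a$. Since every $S \sqsubseteq_{\downarrow d} R$ is univalent, it relates $a$ to at most one set, so the choices entering $\iU_{S \sqsubseteq_{\downarrow d} R} S$ at $a$ are forced: a pair $(a,C)$ lies in this inner union iff $a \in \dom(S)$ for all such $S$ -- equivalently $a \in \dom(R - \iuone)$ -- and then $C = \bigcup \{ \{P(a)\} \mid P \subseteq_d R{\in}^\converse \}$. By Lemma~\ref{lemma.sub-d} applied to $R{\in}^\converse$, this union is the entire $R{\in}^\converse$-image of $a$, which by univalence of $R$ is the unique set that $R$ relates to $a$; therefore $\iU_{S \sqsubseteq_{\downarrow d} R} S = R - \iuone$. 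Since $R$ is univalent, $R$ is the union of $R - \iuone$ with the pairs $(a,\emptyset)$ for which $R_{a,\emptyset}$, that is, the elements of $\dom(R)$ mapped to $\emptyset$; the $\dom(R)$ factor reinstates exactly these pairs, so combining it with the inner union gives $R = \dom(R) \iU_{S \sqsubseteq_{\downarrow d} R} S$.

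I expect the inner-union computation to be the main obstacle. One must check that letting $S$, equivalently $P$, range over \emph{all} admissible choice functions genuinely rebuilds the full image set at each element -- this is exactly Lemma~\ref{lemma.sub-d} transported to $R{\in}^\converse$ -- and one must carefully track the elements of empty $R$-image, which contribute nothing to the inner union and are recovered only through the $\dom(R)$ factor. Univalence of $R$ is the hypothesis that keeps every relevant set unique and makes the element-wise bookkeeping consistent; without it the inner union of the inner-deterministic multirelations below $R$ need not reconstruct $R - \iuone$.
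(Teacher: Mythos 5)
The paper states this lemma without proof, so there is nothing to compare your argument against; judged on its own merits it is correct and well organised up to its final sentence, and that final sentence is where it fails. The correspondence $S \mapsto S 1^\converse$, $P \mapsto P 1$ between the multirelations $S \sqsubseteq_{\downarrow d} R$ and the partial functions $P \subseteq_d R {\in}^\converse$ is right, it settles the second claim, and it correctly reduces the inner union to Lemma~\ref{lemma.sub-d}, giving $\iU_{S \sqsubseteq_{\downarrow d} R} S = R - \iuone$. The gap is the assertion that ``the $\dom(R)$ factor reinstates'' the pairs $(a,\emptyset)$ of $R$. It cannot: $\dom(R)$ is a test, and left-composing a test with a multirelation only restricts it to the domain of $R$ -- it never adds pairs. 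Since $R - \iuone$ already has domain contained in $\dom(R)$, what your argument actually establishes is $\dom(R)\, \iU_{S \sqsubseteq_{\downarrow d} R} S = R - \iuone$, which equals $R$ precisely when $R$ is inner total.

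To see that the missing step is not merely a presentational slip, take $R = \{ (a,\emptyset), (b,\{c\}) \}$. Every $S \sqsubseteq_{\downarrow d} R$ has domain $\dom(R - \iuone) = \{(b,b)\}$ and is contained in $\down{R} \cap \iuatoms = \{ (b,\{c\}) \}$, so the only such $S$ is $\{ (b,\{c\}) \}$; the inner union is $\{ (b,\{c\}) \}$, and composing with $\dom(R) = \{ (a,a), (b,b) \}$ leaves it unchanged, so $(a,\emptyset)$ is not recovered. Nor can the inner union itself supply it, since $\iU_{i \in I} R_i$ contains a pair at $a$ only if every $R_i$ does, and no $S$ is defined at $a$. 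So you must either exhibit a convention under which $\dom(R)$ applied to the inner union re-introduces the empty-image pairs (no standard reading of test composition, nor of $\iU$ over a non-empty index set, does this), or add the inner-totality hypothesis, or restate the decomposition as $R - \iuone = \iU_{S \sqsubseteq_{\downarrow d} R} S$ plus a separate term such as $\dom(R \cap \iuone)\, \iuone$ for the empty images. Your closing paragraph correctly identifies the empty-image elements as the delicate point, but the proof then asserts, rather than proves, that the $\dom(R)$ factor handles them.
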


This and Lemma~\ref{lemma.sub-d} yields the following decomposition theorem for multirelations.

\begin{proposition}
  \label{proposition.dec-thm}
  Let $R : X \rto \Pow Y$.
  Then $R = \dom(R) \bigcup_{S \subseteq_d R} \iU_{T \sqsubseteq_{\downarrow d} S} T$.
\end{proposition}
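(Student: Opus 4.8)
The strategy is to combine the two decomposition results already available: Lemma~\ref{lemma.sub-d} decomposes an arbitrary relation, hence an arbitrary multirelation $R : X \rto \Pow Y$, as $R = \bigcup_{S \subseteq_d R} S$, where each $S$ is univalent with $\dom(S) = \dom(R)$; and Lemma~\ref{lemma.pfun-dec} decomposes each such univalent $S$ as $S = \dom(S) \iU_{T \sqsubseteq_{\downarrow d} S} T$. So the natural approach is to substitute the second decomposition into the first and then simplify the domain factor.

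First I would apply Lemma~\ref{lemma.sub-d} to write $R = \bigcup_{S \subseteq_d R} S$. Since each $S$ in this union is univalent, Lemma~\ref{lemma.pfun-dec} applies to each one, giving $S = \dom(S) \iU_{T \sqsubseteq_{\downarrow d} S} T$. Substituting, $R = \bigcup_{S \subseteq_d R} \dom(S) \iU_{T \sqsubseteq_{\downarrow d} S} T$. Now $\dom(S) = \dom(R)$ for every $S \subseteq_d R$ by definition of $\subseteq_d$, so the domain factor is constant across the union and can be pulled out in front, yielding $R = \dom(R) \bigcup_{S \subseteq_d R} \iU_{T \sqsubseteq_{\downarrow d} S} T$. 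The one subtlety here is that pulling $\dom(R)$ out of the union requires that relational composition on the left with a fixed relation distributes over arbitrary unions, which holds because composition preserves all sups in both arguments (Section~\ref{subsection.binary-relations}); one should note that the inner generalised union $\iU_{T \sqsubseteq_{\downarrow d} S} T$ is itself just a multirelation of type $X \rto \Pow Y$, so the outer union over $S$ and the left-composition with $\dom(R)$ are ordinary relational operations.

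I expect the main obstacle, if any, to be a degenerate-case check rather than a structural one: when $R = \emptyset$, or more generally on the part of $X$ outside $\dom(R)$, the families indexed by $S \subseteq_d R$ and by $T \sqsubseteq_{\downarrow d} S$ may be empty, and one must confirm that the convention for an empty inner union together with the leading $\dom(R)$ factor still reproduces $R$ correctly (in particular that no spurious pair $(a,\emptyset)$ is introduced for $a \notin \dom(R)$). This is handled by the $\dom(R)$ prefix, which restricts everything to the actual domain, exactly as in Lemma~\ref{lemma.pfun-dec}; on $\dom(R)$ the two cited lemmas do the work, and off $\dom(R)$ both sides are empty. Once these bookkeeping points are dispatched, the proof is a two-line substitution.
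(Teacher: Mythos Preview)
Your proposal is correct and follows exactly the approach the paper indicates: the paper simply states that the result follows from Lemma~\ref{lemma.sub-d} and Lemma~\ref{lemma.pfun-dec}, and your substitution of the latter into the former, using $\dom(S)=\dom(R)$ for $S \subseteq_d R$ and distributivity of composition over unions, is precisely how to flesh this out. Your attention to the degenerate case is sound bookkeeping but not required beyond what the $\dom(R)$ prefix already guarantees.
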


\begin{remark}
  Alternatively, we could define $S\sqsubseteq_{\downarrow d} R$ if $S$ is deterministic and inner univalent and $S \subh R$.
  Unlike with $\sqsubseteq_{\downarrow d}$, pairs of the form $(a,\emptyset)$ are now included.
  Both definitions yield a decomposition theorem, but including such pairs in decompositions is unnecessary.
\end{remark}


\section{Co-composition and Intersection-Closure}
\label{section.co-comp}

Recall the interaction of inner union and Peleg composition:
\begin{gather*}
  (R \iu S) \seq T \subseteq (R \seq T) \iu (S \seq T), \qquad
  R \seq (S \iu T) \subseteq (R \seq S) \iu (R\seq T), \\
  T \iu T \subseteq T \Rightarrow (R \iu S) \seq T = (R \seq T) \iu (S \seq T).
\end{gather*}
To obtain similar properties of $\ii$ by inner duality we need to connect $\icpl{}$ and $\seq$.
The relationship
\begin{equation*}
  \textstyle \icpl{(R \seq S)} = \{ (a,C) \mid \exists B .\ R_{a,B} \wedge \exists f .\ f|_B \cap S = \emptyset \wedge C = \bigcap f(B) \}.
\end{equation*}
motivates defining a \emph{co-composition}
\begin{equation*}
  \textstyle R \seqint S = \icpl{(R \seq \icpl{S})} = \{ (a,C) \mid \exists B .\ R_{a,B} \wedge \exists f .\ f|_B \subseteq S \wedge C = \bigcap f(B) \}.
\end{equation*}
It follows immediately that $R \seq S = \icpl{(R \seqint \icpl{S})}$, $\emptyset \seqint R = \emptyset = R \seqint \emptyset$, $1 \seqint R = R$, $\icpl{R} = R \seqint \icpl{1}$ and $\icpl{1} \seqint \icpl{1} = 1$.
But $\seqint$ does not have a right unit because $\iuone \seqint R = \iione$.

We also obtain $R \seqint \iuone = \icpl{(R \seq \iione)}$ and $R \seqint \iione = \icpl{(R \seq \iuone)}$ and it follows that $R \seq \iuone \subseteq R \ii \icpl{R}$ and $R \seqint \iione \subseteq R \iu \icpl{R}$.

The inner isomorphism tells us that the interaction of co-composition with the outer operations is as weak as that of Peleg composition.
Co-composition preserves $\cup$ in its first argument and $\subseteq$ in its second one.
Moreover $R \seqint (S \ii T) \subseteq (R \seqint S) \ii (R \seqint T)$ and $(R \iu S) \seqint T \subseteq (R \seqint T) \ii (S \seqint T)$, and $(R \iu S) \seqint T = (R \seqint T) \ii (S \seqint T)$ whenever $T \ii T \subseteq T$.

Intersection-closure is defined analogously to union-closure with respect to the inner intersection $\iI_{i \in I} R_i$ of a family of multirelations $R_i$.
The isomorphism $\icpl{}$ extends from finite inner union and intersections to arbitrary ones.
For intersection-closed $T$, we have $(\iU_{i \in I} R_i) \seqint T = \iI_{i \in I} (R_i \seqint T)$ for each $I$.

Intersection-closed multirelations have been called ``multiplicative'' in~\cite{Rewitzky2003,RewitzkyBrink2006}, noting distributivity properties of Parikh composition over intersections.
Here we obtain distributivity results of Peleg (co-)composition over inner unions.
The dual additivity property studied by~\cite{Rewitzky2003,RewitzkyBrink2006}, however, differs from union-closure.

Down-closed multirelations are intersection-closed.
Moreover, Lemma~\ref{lemma.one-down-lift} implies that $\up{R} = (\icpl{R}) \seqint \up{(\icpl{1})}$ by inner duality using Lemma~\ref{lemma.updownprops}.
Note that $\up{(\icpl{1})} = \icpl{(\down{1})} = (\iione \cup \icpl{1})$, so that $\up{R} = (\icpl{R}) \seqint (\iione \cup \icpl{1})$.
Thus $\up{(R \seqint S)} = R \seqint \up{S}$ by the inner isomorphism.

The interaction of co-composition with the inner preorders is weak: operation $\seqint$ preserves $\subh$, $\subs$, $\subem$, $\eqh$, $\eqs$ and $\eqem$ in its second argument.
Furthermore, $R \subs R \seqint \iione$.


\section{Conclusion}

We have studied the inner structure of multirelations and their interaction with Peleg composition in the language of relation algebra and universal algebra.
We have considered in particular the operations of inner and outer union, intersection and complementation, a duality between the inner and outer levels, up-closures and down-closures of multirelations and the associated preorders and equivalences, with a view on their structure and future algebraic axiomatisations.

In the second article in this trilogy~\cite{FurusawaGuttmannStruth2023b} we use the results obtained here to study inner and outer univalent and deterministic multirelations and their categories, and introduce determinisation maps from multirelations to inner and outer deterministic multirelations.
In the third article~\cite{FurusawaGuttmannStruth2023c} we use these maps to develop an algebraic approach to modal operators on multirelations, related to previous work by Nerode and Wijesekera~\cite{NerodeWijesekera1990} and Goldblatt~\cite{Goldblatt1992}.

Based on the multirelational language of concrete relations and multirelations and its properties in this work, an axiomatic extension of the relation algebra used in this article with multirelational operations is the most natural continuation.
It also remains to consider other families of multirelations, in particular up-closed and convex-closed ones, and multiplications other than Peleg composition in relationship to the approach in this article, beyond the initial work by Rewitzky~\cite{Rewitzky2003}.
Convex-closed multirelations have so far received little attention, but seem relevant to the semantics and verification of probabilistic programs with probabilistic distribution transformers, at least to abstract interpretations of these~\cite{McIverWeber2005,Weber2008}.

\paragraph{Acknowledgement}
Hitoshi Furusawa and Walter Guttmann thank the Japan Society for the Promotion of Science for supporting part of this research through a JSPS Invitational Fellowship for Research in Japan.


\bibliographystyle{alpha}
\bibliography{multirel}

\appendix

\section{Basis}
\label{section.basis}

Almost every operation in this article can be defined in terms of a basis of 6 operations that mix the relational and the multirelational language: the relational operations $-$, $\cap$, $/$ and the multirelational operations $1$, $\iu$, $\seq$:

\begin{multicols}{3}
\begin{itemize}
\item $R \cup S = -(-R \cap -S)$
\item $R - S = R \cap -S$
\item $\emptyset = R \cap -R$
\item $U = -\emptyset$
\item $\up{R} = R \iu U$
\item ${\in} = \up{1}$
\item $\Id = 1 / 1$
\item $R^\converse = -(-\Id / R)$
\item $S R = -(-S / R^\converse)$
\item $R \backslash S = (S^\converse / R^\converse)^\converse$
\item $\syq{R}{S} = (R \backslash S) \cap (R^\converse / S^\converse)$
\item $R_\kleisli = \syq{{\in} R^\converse {\in}}{\in}$
\item $\Omega = {\in} \backslash {\in}$
\item $C = \syq{\in}{-\in}$
\item $\icpl{R} = R C$
\item $R \ii S = \icpl{(\icpl{R} \iu \icpl{S})}$
\item $\down{R} = X \ii U$
\item $\convex{R} = \up{R} \cap \down{R}$
\item $\iuone = 1 \ii \icpl{1}$
\item $\iione = \icpl{\iuone}$
\item $\dual{R} = -\icpl{R}$
\item $R \seqint S = \icpl{(R \seq \icpl{S})}$
\item $R_\seq = ((\syq{1^\converse{\in}}{\in}) \seq 1^\converse R 1) \Id_\kleisli$
\item $\iuatoms = U 1$
\item $\iiatoms = \icpl{\iuatoms}$
\item $\dom(R) = \Id \cap R R^\converse$
\item $R \subs S \Leftrightarrow S \subseteq \up{R}$
\item $R \subh S \Leftrightarrow R \subseteq \down{S}$
\item $R \subem S \Leftrightarrow R \subh S \wedge R \subs S$
\end{itemize}
\end{multicols}

If $\seq$ is extended to relations, the simpler definition $R_\seq = \Id \seq R$ may be used.
Alternatively, we could of course replace Peleg composition by Peleg lifting in the basis.
Finally, relational $/$ is required to define some of the operations in our list as it is the only operation in the basis that can change types.
We have so far not attempted to axiomatise the basic operations in the sense of (heterogeneous) relation algebra~\cite{SchmidtStroehlein1989}, concurrent dynamic algebra~\cite{FurusawaStruth2016} or likewise.

\end{document}